\newtheorem{thm}{Theorem}
\newtheorem{lem}{Lemma}
\newtheorem{rem}{Remark}
\newtheorem{cor}{Corollary}
\newtheorem{asm}{Assumption}
\newtheorem{defa}{Definition}
\newtheorem{ex}{Example}
\title{\LARGE \bf
Stability and Robustness of Time-Varying Opinion Dynamics: A Graph-Theoretic Approach
}
\author{M. Hossein Abedinzadeh$^{1}$ and Emrah Akyol$^{1}$%
\thanks{*This research is supported by the NSF via (CAREER)  CCF-2048042.}%
\thanks{$^{1}$ M. Hossein Abedinzadeh and Emrah Akyol are with the Department of Electrical and Computer Engineering, Binghamton University (SUNY), Binghamton, NY, 13902 USA. (e-mail: {\tt\small mabedin3@binghamton.edu}, {\tt\small eakyol@binghamton.edu})}%
}
\begin{document}
\maketitle
\thispagestyle{empty}
\pagestyle{empty}

\begin{abstract}
The time-varying Friedkin--Johnsen (TVFJ) model provides a principled foundation for analyzing realistic, evolving social networks by capturing both persistent individual biases and adaptive influences. Building on this model, we develop a graph-theoretic framework that certifies stability and characterizes long-term opinion dynamics. We introduce \emph{defected} and \emph{weakly defected} temporal graphs (DTGs/WDTGs) as topological certificates that translate temporal connectivity and stubborn influence into contraction bounds on the state-transition matrix. Using these concepts, we establish: (i) \emph{asymptotic stability} of the TVFJ model when DTGs recur infinitely often; (ii) \emph{exponential stability} of the semi-periodic Friedkin--Johnsen (SPFJ) model, with explicit growth and decay constants; and (iii) \emph{asymptotic stability} of a trust-based Friedkin--Johnsen (TBFJ) variant under the weaker condition of infinitely many WDTGs. We further prove the \emph{boundedness of the $\omega$-limit set}, ensuring that long-run opinions remain within the convex hull of innate beliefs. For periodically switching Friedkin--Johnsen (PSFJ) models with period \(p\), we derive a \(p\)-LTI decomposition that yields closed-form equilibria and establishes the tight bound \(|\omega|\leq p\). Finally, we show that \emph{robustness} holds: exponential stability in SPFJ and PSFJ systems persists under bounded perturbations of the interaction structure. Collectively, these results unify algebraic stability tests with interpretable, topology-driven analysis, providing scalable and resilient tools for reasoning about opinion formation in dynamic networks.
\end{abstract}

\noindent{\bf Keywords:} 
Opinion dynamics; time-varying Friedkin--Johnsen model; semi-periodic Friedkin--Johnsen; trust-based Friedkin--Johnsen; periodically switching Friedkin--Johnsen; Time-varying networks; Temporal graphs; Temporal networks; Stability analysis; Robustness; \(\omega\)-limit set.

\section{Introduction}
With the rapid growth and increasing complexity of large-scale social networks, there is a pressing need for principled models that capture evolving interaction structures and provide scalable tools for analyzing opinion dynamics and collective behavior.  

Classical opinion--dynamics models laid the foundation for understanding how local interactions shape collective outcomes~\cite{french1956formal,degroot1974reaching,friedkin1999social,abelson1964mathematical,taylor1968towards}. Convergence has been extensively studied through two complementary approaches: algebraic methods and graph-theoretic arguments~\cite{degroot1974reaching,ren2005consensus,parsegov2016novel,demarzo2003persuasion,xia2011clustering}. Much of this literature assumes a \emph{time-invariant} interaction graph, justified when (i) the network topology remains effectively static over the horizon of interest, and (ii) agents are \emph{boundedly rational}, updating via heuristics with limited computation rather than global optimization~\cite{acemoglu2011opinion}.  

In contrast, real networks evolve: nodes and ties may appear, disappear, or change strength, making the interaction topology inherently time varying. Simultaneously, research on human--AI interaction has accelerated~\cite{askarisichani2022predictive,o2022human,wilson2018collaborative,kleinberg2018human,seeber2020machines,amershi2019guidelines}, while AI-mediated curation (e.g., recommender systems, assistant agents) has become pervasive on social platforms~\cite{wu2022graph,ko2022survey,10506571}. Humans are typically modeled as \emph{boundedly rational}, relying on local heuristics, whereas AI systems often employ near-global optimization rules, further challenging the validity of classical assumptions~\cite{chaney2018algorithmic,bakshy2015exposure}.  

These developments motivate a principled framework for time-varying opinion dynamics that: (i) remains consistent with validated classical models while incorporating temporal variability, (ii) captures both human and AI agents together with their distinct forms of bounded rationality, (iii) admits a transparent analytical foundation for rigorous stability and robustness analysis, and (iv) provides interpretable, data-integrated tools applicable across scales from controlled experiments to large online platforms.  

Time-varying extensions are typically studied under two paradigms: \emph{discrete-time snapshots} and \emph{continuously updated} networks~\cite{aggarwal2014evolutionary}. We focus on the discrete-time setting. Two natural candidates for temporal extension are the time-invariant French--DeGroot (TIFD) and time-invariant Friedkin--Johnsen (TIFJ) models. The TIFD model has no exogenous pull toward prejudices~\cite{french1956formal,degroot1974reaching}, whereas the TIFJ model incorporates persistent attraction to innate opinions via stubborn agents~\cite{friedkin1999social}. We study their time-varying counterparts, TVFD and TVFJ, which capture evolving interactions in social and human--AI networks.  

In TIFD, convergence typically leads to consensus or clustering, with outcomes determined by initial opinions~\cite{proskurnikov2017tutorial}. Extensions remain largely confined to consensus analysis~\cite{rainer2002opinion,olfati2006flocking,ren2005consensus,lorenz2005stabilization,blondel2005convergence}, often using the union graph for connectivity guarantees, an approach that overlooks \emph{temporal connectivity persistence} and requires restrictive assumptions such as universal self-loops.  

In contrast, in TIFJ, consensus is only a special case: if the system is asymptotically stable, the trajectories converge to a unique equilibrium regardless of the initial conditions. Extensions to state-dependent and time-varying settings have been studied~\cite{10313365,proskurnikov2017opinion,disaro2024balancing}. Unlike the time-invariant case, where stability coincides with convergence to a fixed point, the TVFJ model may exhibit convergence or \emph{containment}, depending on temporal connectivity~\cite{proskurnikov2017opinion}.  

In real systems, the assumption of non-stubborn agents rarely holds, limiting the explanatory power of consensus-based analysis. The algebraic stability results for TVFJ~\cite{proskurnikov2017opinion} provide general guarantees but often obscure the underlying mechanisms, especially in large-scale networks. This motivates graph-theoretic heuristics that certify stability in terms of temporal propagation of stubborn influence, offering interpretable and scalable insights absent from purely algebraic arguments.  

To this end, we model interactions as a temporal multiplex network~\cite{holme2012temporal,kivela2014multilayer}, where each layer corresponds to a time-stamped interaction graph. To our knowledge, multiplex structures have not yet been systematically used as \emph{graph-theoretic certificates of stability} for TVFJ dynamics.  

We analyze two complementary regimes. First, in the general TVFJ model, where both weights and susceptibilities evolve, we introduce \emph{defected temporal graphs} (DTGs) as topological witnesses of contraction, certifying stability by ensuring stubborn influence percolates across the network. Second, we extend to \emph{trust-based Friedkin--Johnsen models} (TBFJ), where influence is filtered through a static trust matrix and susceptibilities depend only on locally perceived neighborhoods. Here we define \emph{weakly defected temporal graphs} (WDTGs), which require only that each non-stubborn agent remains temporally connected to a stubborn agent, regardless of influence strength.  

Our results show that infinitely recurring DTGs (or WDTGs, in the TBFJ setting) guarantee asymptotic stability. This unifies algebraic contraction arguments with temporal connectivity, providing a transparent graph-theoretic foundation for analyzing time-varying opinion dynamics.  

Beyond stability, we study long-term behavior through the $\omega$-limit set. For periodically switching FJ (PSFJ) models with period \(p\), we derive an exact $p$-LTI decomposition, closed-form equilibria, and the tight bound $|\omega| \leq p$. For semi-periodic defected networks, we establish exponential stability with explicit growth and decay rates. In general, we prove that the $\omega$-limit set of TVFJ dynamics is bounded within the convex hull of innate opinions, guaranteeing strong containment of long-run trajectories.  

Finally, we address robustness. Since real networks exhibit noise, structural uncertainty, and temporal irregularities, we show that exponential stability in SPFJ and PSFJ settings persists under bounded perturbations. This bridges idealized models with practical, imperfect networks.  

\textbf{Related Work.} Research on time-varying networks has largely focused on consensus dynamics. For example, ~\cite{rainer2002opinion,olfati2006flocking,ren2005consensus,lorenz2005stabilization,blondel2005convergence} analyze convergence relying on the union graph to guarantee connectivity over aggregated intervals, but these approaches overlook the critical aspect of \emph{temporal connectivity persistence}, often requiring restrictive assumptions such as universal self-loops. More recently, multilayer and multiplex frameworks have been employed in the study of opinion dynamics~\cite{ju2025influence,ShiuNaghizadeh2025Allerton,hu2017opinion}. In particular,~\cite{ShiuNaghizadeh2025Allerton} examines multiplex structures with heterogeneous update frequencies across layers; however, these works use multilayer modeling primarily to represent multiple coexisting social networks rather than as a tool for investigating system stability and are all formulated under the TVFD model. In contrast, stability analysis of the TVFJ model has been studied in~\cite{proskurnikov2017opinion,disaro2024balancing}. The work in~\cite{disaro2024balancing} considers a special case in which the susceptibility matrix is fixed and the influence matrix takes values in a finite set, with entries updated according to a homophily principle; they prove that the opinion matrix converges asymptotically to a constant solution if and only if the influence matrix eventually becomes constant. In contrast,~\cite{proskurnikov2017opinion} analyzes the general TVFJ model without additional assumptions, providing sufficient algebraic conditions for stability. Our work extends~\cite{proskurnikov2017opinion} by introducing a transparent graph-theoretic framework that not only generalizes their algebraic condition but also yields interpretable topology-driven stability certificates.

\textbf{Contributions.} The main technical contributions of this work are:  
\begin{itemize}
    \item Development of a graph-theoretic framework for analyzing stability in time-varying opinion dynamics.  
    \item Derivation of sufficient conditions for asymptotic stability of the TVFJ model and exponential stability in semi-periodic defected networks.  
    \item Introduction of a trust-based opinion dynamics model whose stability is guaranteed under the weaker WDTG condition.  
    \item Characterization of the $\omega$-limit set of PSFJ models, including a $p$-LTI decomposition, closed-form equilibria, and the tight bound $|\omega| \leq p$.  
    \item Establishment of robustness guarantees showing that exponential stability persists under bounded perturbations of interaction weights.  
\end{itemize}

\textbf{Organization.} Section~\ref{sec:OpinionDynamicsModel} formalizes the TVFJ and TBFJ models and introduces the relevant stability concepts. Section~\ref{sec:TemporalNetworkStructure} defines temporal paths, DTGs, and WDTGs. Section~\ref{sec:omegaBoundedness} establishes the boundedness of the $\omega$-limit sets. Section~\ref{sec:TVFJStability} proves asymptotic stability of the TVFJ model and exponential stability of SPFJ models under infinitely many DTGs. Section~\ref{sec:TrustBasedStability} extends the analysis to TBFJ dynamics and PSFJ models. Section~\ref{sec: RobustnessofStabilityUnderPerturbations} develops robustness guarantees. Finally, Section~\ref{Conclution} concludes with directions for multi-dimensional, strategic, and behavioral extensions.

\section{Preliminaries and Notation}
Throughout this paper, we denote vectors by bold lowercase letters (e.g., \( \mathbf{x} \)) and matrices by uppercase letters (e.g., \( A \)). The \( (i,j) \)-th entry of a matrix \( A \in \mathbb{R}^{n \times m} \) is denoted by \( a_{ij} \). If \( A \) is a diagonal matrix, we use \( a_{i} \) to denote its \( i \)-th diagonal element. Vector and matrix norms are denoted by \( \|\cdot\| \), without specifying a particular type of norm. For clarity in analysis and technical proofs, we adopt the \( \ell_1 \) norm:
\[
\|\mathbf{x}\| := \sum_i |x_i|, \qquad \|A\| := \max_i \sum_j |a_{ij}|.
\]

A matrix $A \in [0,1]^{n \times m}$ is row-stochastic if $ \sum_{j=1}^m{{a_{ij}}}=1$ and sub-row-stochastic if $ \sum_{j=1}^m{{a_{ij}}}\leq 1$ for all $ i \leq n$.

\section{Opinion Dynamics Model}
\label{sec:OpinionDynamicsModel}

Consider a social network consisting of a finite set of agents \( \mathcal{V} = \{\mathrm{v}_1, \ldots, \mathrm{v}_n\} \). Each agent maintains an \emph{expressed opinion} \( x_i[t] \in [0,1] \) at time \( t \), and an \emph{innate opinion} \( s_i \in [0,1] \) that remains fixed. Collect these into the vectors
\begin{align*}
    \mathbf{x}[t] &= \begin{bmatrix} x_1[t] & \cdots & x_n[t] \end{bmatrix}^\top, \quad    \mathbf{s} &= \begin{bmatrix} s_1 & \cdots & s_n \end{bmatrix}^\top.
\end{align*}
The opinion dynamics process is described by the time-varying Friedkin--Johnsen (TVFJ) model~\cite{proskurnikov2017opinion}:
\begin{equation}
    \mathbf{x}[t+1] = \Lambda[t] W[t] \mathbf{x}[t] + (I - \Lambda[t])\mathbf{s},
    \label{eq:update_equation}
\end{equation}
where \( \Lambda[t] = \mathrm{diag}(\lambda_1[t], \ldots, \lambda_n[t]) \) is the diagonal susceptibility matrix with \( \lambda_i[t] \in [0,1] \), and \( W[t] \in \mathbb{R}^{n \times n} \) is row-stochastic. Each \( W[t] \) induces a time-varying adjacency matrix \( A[t] \in \{0,1\}^{n \times n} \) with
\[
a_{ij}[t] = \begin{cases}
1, & w_{ij}[t] > 0, \\ 0, & \text{otherwise},
\end{cases}
\]
and the corresponding neighbor set of agent \( \mathrm{v}_i \) at time \( t \),
\( \mathcal{N}_i[t] := \{ \mathrm{v}_j \in \mathcal{V} \mid a_{ij}[t] = 1 \} \).

The solution of \eqref{eq:update_equation}, initialized at \( t_0 \), is
\begin{equation}
    \mathbf{x}[t] = \Phi(t, t_0)\,\mathbf{x}[t_0] + \sum_{\tau = t_0}^{t-1} \Phi(t, \tau+1)\,(I - \Lambda[\tau])\,\mathbf{s},
    \label{eq:fj_solution}
\end{equation}
where the state transition matrix \(\Phi(t,\tau)\) is
\[
\Phi(t,\tau) :=
\begin{cases}
\prod_{k=\tau}^{t-1} \Lambda[k] W[k], & t > \tau, \\
I, & t = \tau.
\end{cases}
\]

For a trajectory \( \{\mathbf{x}[t]\}_{t=0}^\infty \) with initial state \( \mathbf{x}[0] = \mathbf{x}_0 \), the \emph{\(\omega\)-limit set} is
\[
\omega(\mathbf{x}_0) := \left\{ \mathbf{y} \in \mathbb{R}^n ~\middle|~ \exists\, \{t_k\}\subset\mathbb{N},\ \lim_{k \to \infty} \mathbf{x}[t_k] = \mathbf{y} \right\}.
\]
This set capturing the asymptotic effects of both the initial state and the evolving influence structure.

\begin{defa}[Trust-Based Opinion Dynamics]
Let \( A[t] \) be a time-varying adjacency matrix and \( \hat{W} \in \mathbb{R}^{n \times n} \) a fixed row-stochastic trust matrix. Define
\[
w_{ij}[t] := \frac{a_{ij}[t]\hat{w}_{ij}}{\sum_{k \in \mathcal{N}_i[t]} a_{ik}[t]\hat{w}_{ik}}, 
\quad 
\lambda_i[t] := f_i(\mathcal{N}_i[t]),
\]
where \( f_i : 2^{\mathcal{V}} \to [0,1] \) with \( f_i(\emptyset)=0 \).  
The dynamics retain~\eqref{eq:update_equation}, with \( W[t] \) and \( \Lambda[t] \) given above.
\label{defa:Trust_Based_TVFJ}
\end{defa}

\begin{rem}
    Definition \ref{defa:Trust_Based_TVFJ} captures two conditions: humans exhibit \emph{bounded rationality}, so both their influence weights \(W[t]\) and susceptibilities \(\Lambda[t]\) vary over time \cite{acemoglu2011opinion}; AI agents, by contrast, have global access but operate under \emph{rational inattention}, distributing fixed trust weights across available neighbors in order to increase utility while reducing processing cost \cite{mackowiak2023rational}.
\end{rem}

\begin{defa}[Asymptotic and Exponential Stability]
The opinion dynamics \eqref{eq:update_equation} is \emph{asymptotically stable (AS)} if, for all \( \tau \geq 0 \),
\[
\lim_{t \to \infty} \|\Phi(t,\tau)\| = 0.
\]
It is \emph{exponentially stable (ES)} if there exist constants \( c > 0 \) and \( \gamma \in (0,1) \) such that
\[
\|\Phi(t,\tau)\| \leq c\,\gamma^{t-\tau}, \quad \forall\, t \geq \tau \geq 0.
\]
Here, \( c \) is the \emph{growth factor}, and \( \gamma \) the \emph{decay rate}.
\end{defa}

\begin{defa}[Stubborn and Strictly Stubborn Agents]
Agent \( \mathrm{v}_i \) is \emph{stubborn} at time \( t \) if \( \lambda_i[t] < 1 \).  
It is \emph{strictly stubborn} (or \(\epsilon\)-stubborn) if there exists \( \epsilon > 0 \) such that \( \lambda_i[t] \leq 1-\epsilon \).
\label{defa:strictly_stubborn}
\end{defa}

\begin{asm}
The susceptibility of agent \( \mathrm{v}_i \) satisfies \( \lambda_i[t] = 0 \) if and only if its influence row is zero, i.e., \( w_{ij}[t] = 0 \) for all  \( \mathrm{v}_j \in \mathcal{V} \).
\label{asm:asm1}
\end{asm}

\begin{asm}
To avoid trivial stability, there is no time \( \tau \) such that the susceptibility matrix vanishes identically, i.e., \( \Lambda[\tau] = \mathbf{0} \).
\label{asm:asm2}
\end{asm}

\section{Temporal Network Structure}
\label{sec:TemporalNetworkStructure}

At time \( t \), interactions are described by the influence matrix \( W[t] \), which induces the directed graph \( \mathcal{G}[t] = (\mathcal{V}, \mathcal{E}[t]) \) with edge set \( \mathcal{E}[t] := \{ (\mathrm{v}_j,\mathrm{v}_i) \mid w_{ij}[t] > 0 \} \). A \emph{temporal graph} \( \mathcal{G}_{t_1}^{t_2} \) corresponding to the update equation~\eqref{eq:update_equation} in the time interval \([t_1, t_2]\) is defined as a multiplex-directed graph \(\mathcal{G}_{t_1}^{t_2} = \{ \mathcal{G}[k] \}_{k=t_1}^{t_2}\), where each layer \( \mathcal{G}[k] \) represents the topology of the network at the time step \( k \).  A \emph{multiplex network} is a multilayer network with a fixed node set and evolving connectivity across layers ~\cite{kivela2014multilayer}. A triplet \( (\mathrm{v}_j, \mathrm{v}_i, t) \) is a \emph{temporal edge} if \( (\mathrm{v}_j, \mathrm{v}_i) \in \mathcal{E}[t] \), and is said to be \emph{\(w\)-edge} if \( w_{ij}[t] \geq w \) for some threshold \( w > 0 \). A \emph{temporal path} from node \( \mathrm{v}_j \) to node \( \mathrm{v}_i \) over the interval \( [t_{k_0}, t_{k_m}] \) is a sequence of temporal edges connecting intermediate nodes across consecutive time steps. Formally, it is represented as:
\begin{align*}
(\mathrm{v}_j , \mathrm{v}_i, t_{k_0}, t_{k_m}) = \bigl\{&(\mathrm{v}_{j}, \mathrm{v}_{i_1}, t_{k_0}),(\mathrm{v}_{i_1}, \mathrm{v}_{i_2}, t_{k_1}), \dots\\
&(\mathrm{v}_{i_{m}}, \mathrm{v}_i, t_{k_m}) \bigr\}.
\end{align*}
Let \( (\mathrm{v}_j , \mathrm{v}_i, t_1, t_2) \) denote a temporal path. Such a path is called an \emph{s-path} if the node \( \mathrm{v}_j \) is stubborn. It is called an \emph{influential-path} if \( \mathrm{v}_j \) is \(\epsilon\)-stubborn and all edges along the path are \(w\)-edges.

\begin{defa}[Defected and Weakly Defected Temporal Graphs]
A temporal graph \( \mathcal{G}_{t_0}^{t_d} \) is a \emph{weakly defected temporal graph (WDTG)} if there exists \( k \in [t_0,t_d) \) such that, in layer \( \mathcal{G}[k] \), every agent \( \mathrm{v}_i \) is either stubborn or connected to a stubborn agent via a finite \( s \)-path fully contained in \( [t_0,t_d) \).  

It is a \emph{defected temporal graph (DTG)} if there exists \( k \in [t_0,t_d) \) such that, in layer \( \mathcal{G}[k] \), every agent \( \mathrm{v}_i \) is either \(\epsilon\)-stubborn or connected to an \(\epsilon\)-stubborn agent via a finite influential-path entirely within \( [t_0,t_d) \).
\label{defa:dtg_wdtg}
\end{defa}

\begin{figure}[H]
\centering
\includegraphics[width=0.25\textwidth]{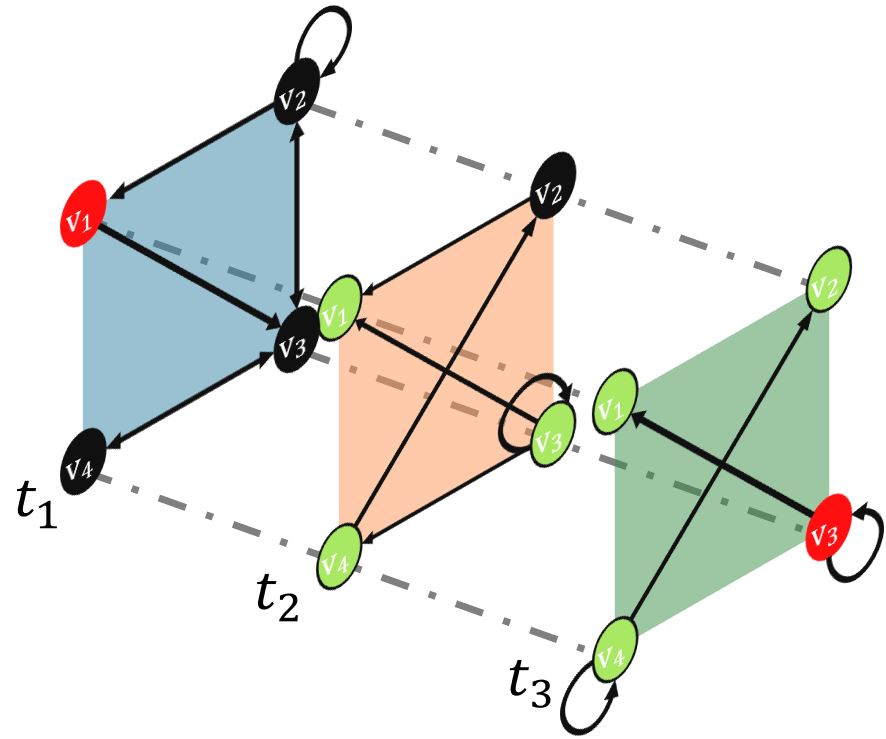}
\caption{Layer \( \mathcal{G}(t_3) \) is a defected layer. All edges represent influential connections; red nodes indicate strictly stubborn agents, and green nodes are those temporally connected to the strictly stubborn agent via influential paths.}
\label{fig:defected-layer}
\end{figure}

Figure~\ref{fig:defected-layer} illustrates a sequence of time-indexed interaction layers that represent a temporal multiplex network. In this example, layer \( \mathcal{G}(t_3) \) contains a strictly stubborn agent (node \( \mathrm{v}_3 \), shown in red), and all other agents are temporally connected to the strictly stubborn agent \(\mathrm{v}_1\) in \( \mathcal{G}(t_1) \) through influential-path. This demonstrates the idea of a defected-layer, making this a DTG.

\begin{rem}
A defected temporal graph (DTG) (or, more generally, a WDTG) serves as a structural certificate: within one layer of the temporal interaction structure, every agent is either strictly stubborn itself (stubborn in the WDTG case) or is temporally connected to a strictly stubborn agent through an influence path (an s-path in the WDTG case). This ensures that stubborn influence permeates the network. Moreover, Lemma~\ref{thm:contraction_defected} shows that such a structure leads to contraction of the state-transition dynamics.
\end{rem}

\begin{lem}
Let \( \Phi(t_d, t_0) \) denote the state transition matrix of the opinion dynamics defined in~\eqref{eq:update_equation} over the interval \([t_0, t_d]\), and let \( \mathcal{G}_{t_0}^{t_d} \) be the corresponding temporal interaction graph. If \( \mathcal{G}_{t_0}^{t_d} \) is a DTG, then the transition matrix \( \Phi(t_d, t_0) \) satisfies the following contraction bound under the \( \ell_1 \)-induced norm:
\[
\|\Phi(t_d, t_0)\| \;\le\; 1 - \epsilon\, w^{\delta}, \qquad \delta := t_d - t_0.
\]
\label{thm:contraction_defected}
\end{lem}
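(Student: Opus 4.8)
The plan is to bound the induced norm $\|\Phi(t_d,t_0)\| = \max_i \sum_j [\Phi(t_d,t_0)]_{ij}$ by controlling the row sums of the matrix product. First I would observe that each factor $M_k := \Lambda[k]W[k]$ is sub-row-stochastic: since $W[k]$ is row-stochastic and $0 \le \lambda_i[k] \le 1$, the $i$-th row of $M_k$ sums to $\lambda_i[k] \le 1$. Writing $\Phi(t_d,t_0) = M_{t_d-1}\cdots M_{t_0}$ and applying it to the all-ones vector $\mathbf{1}$ yields the vector of row sums, which I would track through the forward recursion $\mathbf{p}_{t_0} = M_{t_0}\mathbf{1}$, $\mathbf{p}_{\ell} = M_\ell \mathbf{p}_{\ell-1}$, whose entries all lie in $[0,1]$ and whose terminal value $\mathbf{p}_{t_d-1}$ is exactly the row-sum vector. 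This gives the trivial bound $\|\Phi(t_d,t_0)\| \le 1$; the task is to sharpen it to $1 - \epsilon w^\delta$ by showing every entry of $\mathbf{p}_{t_d-1}$ carries a deficit of at least $\epsilon w^\delta$.

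The mechanism is a one-step deficit-propagation estimate, where I define the deficit of agent $i$ at step $\ell$ as $1 - p_\ell^{(i)} \ge 0$. Two facts drive the argument. First, if agent $i$ is $\epsilon$-stubborn at step $\ell$, then $p_\ell^{(i)} = \lambda_i[\ell]\sum_j w_{ij}[\ell] p_{\ell-1}^{(j)} \le (1-\epsilon)\cdot 1$, so a fresh deficit of size $\epsilon$ is created. Second, if a neighbor $j$ with $w_{ij}[\ell] \ge w$ already carries deficit $d$, then, bounding the remaining convex weights by $1$, we get $p_\ell^{(i)} \le \lambda_i[\ell]\bigl(1 - w_{ij}[\ell]\,d\bigr) \le 1 - w d$, so the deficit is transmitted across a $w$-edge with multiplicative loss $w$. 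These are precisely the two primitives encoded by an $\epsilon$-stubborn agent and by the $w$-edges of an influential-path.

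Chaining these two estimates along an influential-path, I would prove by induction on path length that if agent $i$ is connected to an $\epsilon$-stubborn agent by an influential-path of length $m$, then $i$ acquires a deficit of at least $\epsilon w^{m}$. Because the path lies entirely in $[t_0,t_d)$ its length satisfies $m \le \delta := t_d - t_0$, and since $w \le 1$ we have $\epsilon w^m \ge \epsilon w^\delta$. The DTG hypothesis then guarantees a layer in which every agent is either $\epsilon$-stubborn (covered by the first primitive) or the terminal node of such a path (covered by the induction), so every entry of the terminal row-sum vector obeys $p_{t_d-1}^{(i)} \le 1 - \epsilon w^\delta$, and taking the maximum over $i$ gives the claimed bound.

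The main obstacle I anticipate is bookkeeping the direction and persistence of the deficit through the product. The forward recursion $\mathbf{p}_\ell = M_\ell \mathbf{p}_{\ell-1}$ transmits a deficit from agent $j$ at step $\ell-1$ to agent $i$ at step $\ell$ exactly when $j$ influences $i$, which aligns with the increasing-time orientation of an influential-path; getting this alignment right, rather than propagating backward, is essential. The more delicate point is persistence: the deficit must not only travel along the path but also survive from the path's terminal step to step $t_d-1$ at the same agent. I would handle this by arranging each influential-path to terminate at its endpoint at the final step, or by absorbing the residual transport into the exponent, so that the cumulative multiplicative loss stays bounded by the single factor $w^\delta$; verifying that this accounting never exceeds $\delta$ edges is the crux that makes the exponent $\delta$ the correct one.
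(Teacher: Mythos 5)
Your proposal is correct and takes essentially the same route as the paper's own proof: the paper likewise bounds the $\ell_1$-induced norm row by row, treating an $\epsilon$-stubborn agent directly (row sum $\le 1-\epsilon$) and a path-connected agent by asserting that the stubborn deficit is transmitted along the influential path with attenuation at least $w$ per edge, giving row sum $\le 1-\epsilon w^{\delta_i}$ before maximizing over agents. Your deficit-creation/deficit-transmission recursion is a more explicit rendering of the paper's one-line claim that ``the total influence transmitted is at least $w^{\delta_i}$,'' and the persistence subtlety you rightly flag (the deficit surviving from the path's terminal step to $t_d$) is handled in the paper only implicitly, by effectively placing the defected layer at the terminal time---exactly the first of the two fixes you propose.
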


\begin{proof}
By Definition~\ref{defa:dtg_wdtg}, at time \(t_d\) every agent \( \mathrm{v}_i \in \mathcal{V} \) is either (i) strictly stubborn in layer \( \mathcal{G}[t_d] \), or (ii) temporally connected to a strictly stubborn agent through an influential path within \([t_0,t_d]\).  
We analyze these two cases.

\medskip
\noindent \emph{Case 1: strictly stubborn at time \(t_d\).}  
Suppose agent \( \mathrm{v}_i \) is strictly stubborn, so by Definition~\ref{defa:strictly_stubborn} we have
\[
\lambda_i[t_d] \leq 1 - \epsilon.
\]
Since the \(i\)-th row of \( \Lambda[t_d] W[t_d] \) is obtained by multiplying \(\lambda_i[t_d]\) with a convex combination of weights summing to one, the \(i\)-th row sum of the state-transition matrix satisfies
\[
\sum_{j=1}^n \phi_{ij}(t_d,t_0) \leq \lambda_i[t_d] \leq 1 - \epsilon.
\]

\medskip
\noindent \emph{Case 2: connected via an influential path.}  
Suppose agent \( \mathrm{v}_i \) is not itself strictly stubborn, but is connected to a strictly stubborn agent \( \mathrm{v}_j \) at time \(\tau \leq t_d\) through an influential path of length \(\delta_i = t_d-\tau\).  
Along this path each edge weight is at least \(w\), so the total influence transmitted from \(\mathrm{v}_j\) to \(\mathrm{v}_i\) is at least \(w^{\delta_i}\).  Since agent \(\mathrm{v}_j\) is strictly stubborn with parameter \(\epsilon\), the row sum for agent \(i\) is bounded by
\[
\sum_{j=1}^n \phi_{ij}(t_d,t_0) \leq 1 - \epsilon w^{\delta_i}.
\]

\medskip
\noindent \emph{Combining the two cases.}  
In both cases the \(i\)-th row sum is strictly less than one.  
Letting \(\delta := \max_i \delta_i \leq t_d - t_0\), we obtain the uniform contraction bound
\[
\|\Phi(t_d,t_0)\| \;\leq\; 1 - \epsilon w^{\delta}.
\]

Thus, the existence of a defected temporal graph guarantees contraction of the state transition matrix.
\hfill\qedsymbol
\end{proof}

\begin{lem}
\label{thm: union of DTG}
Let \( \Phi(t_d, t_1) \) denote the state transition matrix of the opinion dynamics defined in~\eqref{eq:update_equation} over the interval \([t_1, t_d]\), and let \( \mathcal{G}_{t_1}^{t_d} \) be the corresponding temporal interaction graph. Suppose that this interval is partitioned into \( d \) consecutive subintervals:
\[
\mathcal{G}_{t_1}^{t_2},\ \mathcal{G}_{t_2}^{t_3},\ \dots,\ \mathcal{G}_{t_{d-1}}^{t_d},
\]
among which \( \alpha \) of them are DTGs. Then, the transition matrix \( \Phi(t_d, t_1) \) satisfies the following contraction bound under the \( \ell_1 \)-induced norm:
\[
\|\Phi(t_m,\, t_1)\| \leq \prod_{a=1}^{\alpha}\left(1 - \epsilon w^{\delta_a} \right),
\]
where \( \delta_a = t_{a+1} - t_a \) denotes the length of the \( a \)-th defected interval.
\end{lem}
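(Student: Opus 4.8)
The plan is to factor the long-horizon transition matrix across the partition and apply the single-interval contraction result (Lemma~\ref{thm:contraction_defected}) to each defected subinterval, while bounding the non-defected subintervals by the fact that every $\Lambda[k]W[k]$ is sub-row-stochastic. The starting point is the semigroup (cocycle) property of the state-transition matrix, namely
\[
\Phi(t_d,t_1) \;=\; \Phi(t_d,t_{d-1})\,\Phi(t_{d-1},t_{d-2})\cdots \Phi(t_2,t_1),
\]
which follows directly from the definition of $\Phi$ as a product of the per-step matrices $\Lambda[k]W[k]$. Submultiplicativity of the $\ell_1$-induced norm then gives
\[
\|\Phi(t_d,t_1)\| \;\le\; \prod_{r=1}^{d-1} \|\Phi(t_{r+1},t_r)\|.
\]

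Next I would classify each factor. For each subinterval $\mathcal{G}_{t_r}^{t_{r+1}}$ that is a DTG, Lemma~\ref{thm:contraction_defected} yields $\|\Phi(t_{r+1},t_r)\| \le 1-\epsilon w^{\delta_r}$ with $\delta_r = t_{r+1}-t_r$. For each non-defected subinterval I would use that every matrix $\Lambda[k]W[k]$ has row sums at most one (since $W[k]$ is row-stochastic and $\lambda_i[k]\le 1$), so it is sub-row-stochastic; the product of sub-row-stochastic matrices is sub-row-stochastic, hence its $\ell_1$-induced (maximum-row-sum) norm is at most one. Substituting these two bounds into the product and discarding the factors equal to one leaves exactly the product over the $\alpha$ defected intervals,
\[
\|\Phi(t_d,t_1)\| \;\le\; \prod_{a=1}^{\alpha}\bigl(1-\epsilon w^{\delta_a}\bigr),
\]
which is the claimed bound (modulo the typographical use of $t_m$ for $t_d$ in the statement).

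The main obstacle I anticipate is not conceptual but one of bookkeeping and rigor at the boundary between intervals. The clean factorization requires the defected-layer witness time $k$ guaranteed by Definition~\ref{defa:dtg_wdtg} to sit at the right endpoint of its subinterval so that Lemma~\ref{thm:contraction_defected} applies verbatim; in the lemma's proof the contraction is read off from the row sums evaluated at the terminal time $t_d$ of that block, whereas the WDTG/DTG definition only asserts a defected layer somewhere in $[t_0,t_d)$. I would resolve this by observing that any block whose witness layer is interior can be split so that the contraction is realized on the sub-block ending at the witness layer, with the trailing steps absorbed into a sub-row-stochastic (norm-$\le 1$) factor; this preserves the per-block bound $1-\epsilon w^{\delta_a}$ and does not increase the product. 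The remaining care is ensuring that the influential path certifying a given block is \emph{contained} within that block, so that the exponent $\delta_a$ in Lemma~\ref{thm:contraction_defected} is the block length and the per-block bounds are independent and multiply cleanly.
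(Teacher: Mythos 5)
Your proposal is correct and follows essentially the same route as the paper's proof: factor \( \Phi(t_d,t_1) \) across the partition via the cocycle property, apply submultiplicativity of the \( \ell_1 \)-induced norm, bound each defected block by Lemma~\ref{thm:contraction_defected} and each non-defected block by \( 1 \) via sub-row-stochasticity. Your additional care about the placement of the witness layer within a block (splitting at the witness and absorbing trailing steps into a norm-\( \le 1 \) factor) is a legitimate refinement of a detail the paper's one-line proof passes over silently, but it does not constitute a different argument.
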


\begin{proof}
The result is now followed by the sub-multiplicativity of the  \( \ell_1 \)-induced norm. For a sequence of intervals \( [t_1, t_2], [t_2, t_3], \dots, [t_{d-1}, t_d] \), among which \( \alpha \) are defected temporal graph with lengths \( \delta_1, \delta_2, \dots, \delta_\alpha \), we have:
\[
\|\Phi(t_d,\, t_1)\| \leq \prod_{j=1}^{d-1}\|\Phi(t_{j+1},\, t_j)\| \leq \prod_{a=1}^{\alpha}\left(1 - \epsilon w^{\delta_a} \right).
\]
\end{proof}

\begin{lem}
\label{thm: union of WDTG}
Let \( \Phi(t_d, t_0) \) denote the state transition matrix of the opinion dynamics defined in~\eqref{eq:update_equation} over the interval \([t_0, t_d]\), and let \( \mathcal{G}_{t_0}^{t_d} \) be the corresponding temporal interaction graph. Suppose that this interval contains at least one WDTG. Then, the transition matrix satisfies:
\[
\|\Phi(t_d, t_0)\| \;<\; 1.
\]
\end{lem}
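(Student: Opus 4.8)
The plan is to show that $\Phi(t_d,t_0)$ is a nonnegative, sub-row-stochastic matrix whose every row sum is \emph{strictly} below one, and then to invoke the finiteness of the agent set $\mathcal V$ to pass from ``each row sum $<1$'' to ``$\max_i\sum_j\phi_{ij}(t_d,t_0)<1$.'' The argument runs parallel to Lemma~\ref{thm:contraction_defected}; the essential difference is that a WDTG supplies only strict stubbornness ($\lambda_i<1$) and strictly positive edge weights ($w_{ij}>0$) rather than the uniform margins $\epsilon$ and $w$, so we obtain a per-row, non-uniform bound instead of the single closed-form bound $1-\epsilon w^{\delta}$.

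First I would record that each factor $\Lambda[k]W[k]$ is entrywise nonnegative with $i$-th row sum equal to $\lambda_i[k]\le 1$, hence sub-row-stochastic; since a product of nonnegative sub-row-stochastic matrices is again nonnegative and sub-row-stochastic, $\Phi(t_d,t_0)\ge 0$ and $\|\Phi(t_d,t_0)\|\le 1$. Because the row sums lie in $[0,1]$ and there are only $n$ of them, it then suffices to rule out any row sum equal to one.

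Next I would fix the defected layer $\mathcal{G}[k]$ guaranteed by Definition~\ref{defa:dtg_wdtg} and split on the two WDTG cases, exactly as in Lemma~\ref{thm:contraction_defected}. For an agent $\mathrm v_i$ that is stubborn, $\lambda_i<1$ forces its row sum to be at most $\lambda_i<1$. For an agent $\mathrm v_i$ connected to a stubborn agent $\mathrm v_j$ by a finite $s$-path, I would propagate the ``defect'' (the sub-unit row sum created at $\mathrm v_j$ by $1-\lambda_j>0$) along the consecutive temporal edges of the path; writing $\epsilon_i:=1-\lambda_j>0$, letting $w_i>0$ be the minimum edge weight encountered on that path (positive since the path is finite), and $\delta_i$ its length, the same row-sum estimate as in Lemma~\ref{thm:contraction_defected} yields $\sum_j\phi_{ij}(t_d,t_0)\le 1-\epsilon_i w_i^{\delta_i}<1$. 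Since the WDTG condition covers \emph{every} agent, each of the $n$ rows satisfies a strict bound of this form; setting $\eta:=\min_i\epsilon_i w_i^{\delta_i}$ and using $n<\infty$ gives $\eta>0$, whence $\|\Phi(t_d,t_0)\|\le 1-\eta<1$.

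The main obstacle is the faithful propagation of the strict inequality across the \emph{whole} interval $[t_0,t_d)$: the matrix product visits every time step and couples an agent only to its in-neighbors at each step, whereas the $s$-path of Definition~\ref{defa:dtg_wdtg} is specified abstractly. I would therefore need to verify that the consecutive-time $s$-path matches the walk structure of $\Phi(t_d,t_0)$, so that the factor $1-\lambda_j$ seeded at $\mathrm v_j$ is never multiplied away before it reaches the $i$-th row at time $t_d$, and that the intervening stochastic layers (which cannot increase a row sum) do not ``heal'' a row by routing it entirely through still-undefected neighbors. Once this bookkeeping is in place, the loss of uniform margins is harmless: finiteness of $\mathcal V$ converts the family of strict per-row bounds into the single strict bound $\|\Phi(t_d,t_0)\|<1$, which is all the lemma claims.
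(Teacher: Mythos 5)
Your proof is correct and takes essentially the same route as the paper: the paper's own proof simply asserts that the contraction argument of Lemma~\ref{thm:contraction_defected} carries over to WDTGs ``with a non-uniform contraction factor strictly less than one,'' which is precisely your per-row bound \( 1-\epsilon_i w_i^{\delta_i} < 1 \) followed by a minimum over the finitely many rows. If anything, your write-up is more careful than the paper's two-sentence sketch, since you make explicit why the per-agent margins \( \epsilon_i = 1-\lambda_j > 0 \) and \( w_i > 0 \) exist (finiteness of the interval, the \( s \)-path, and the agent set) rather than appealing to a limiting-case heuristic.
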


\begin{proof}
The proof follows similarly to that of Lemma~\ref{thm: union of DTG}. WDTGs can be viewed as limiting cases of DTGs, in which the degree of stubbornness \( \epsilon \) and the influence weights \( w \) become arbitrarily small. Consequently, the contraction bound established for DTGs also holds for WDTGs, albeit with a non-uniform contraction factor strictly less than one.
\end{proof}

In the TIFJ model, both the influence matrix \( W[t] \) and the susceptibility matrix \( \Lambda[t] \) are constant over time~\cite{friedkin1999social}. Necessary and sufficient conditions for asymptotic stability were derived in~\cite{parsegov2016novel}. These classical results can be interpreted through the lens of WDTGs. The following corollary restates the stability condition for the TIFJ model using this perspective.

\begin{cor}
\label{cor:stationary-stability}
Assume in~\eqref{eq:update_equation} that both \( W[t] = W \) and \( \Lambda[t] = \Lambda \) are constant over time. Then the system is asymptotically stable if and only if the influence graph induced by \( W \) forms a WDTG over some interval \( [0, \tau] \) with \( \tau \leq n \).
\end{cor}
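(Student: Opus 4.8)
The plan is to reduce the statement to a spectral characterization of the constant sub-stochastic matrix $M := \Lambda W$ and then identify that characterization with the static WDTG condition. Since $W[t]=W$ and $\Lambda[t]=\Lambda$, the transition matrix collapses to $\Phi(t,0)=M^{t}$, so asymptotic stability is equivalent to $\lim_{t\to\infty}\|M^{t}\|=0$, i.e. to $\rho(M)<1$, where $\rho$ denotes the spectral radius. I would record at the outset that $M$ is nonnegative with $i$-th row sum $\lambda_i\sum_j w_{ij}=\lambda_i\le 1$, and that by row-stochasticity of $W$ no row of $W$ vanishes; hence Assumption~\ref{asm:asm1} forces $\lambda_i>0$ for every $i$, so the support of $M$ coincides with that of $W$ and directed reachability in the graph of $M$ is exactly the reverse of influence reachability in $\mathcal{G}$. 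A stubborn agent (Definition~\ref{defa:strictly_stubborn}) is then precisely an index whose $M$-row sum $\lambda_i$ is strictly below one.

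For the sufficiency direction I would invoke Lemma~\ref{thm: union of WDTG} directly: if $\mathcal{G}$ forms a WDTG over $[0,\tau]$, then $\|\Phi(\tau,0)\|=\|M^{\tau}\|<1$, whence $\rho(M)\le\|M^{\tau}\|^{1/\tau}<1$ and the system is asymptotically stable. Equivalently, $\|M^{k\tau}\|\le\|M^{\tau}\|^{k}\to 0$ with the intervening powers uniformly bounded, giving $\|M^{t}\|\to 0$.

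The necessity direction is the main work. Assuming $\rho(M)<1$, I would argue by contraposition that the reachability condition of Definition~\ref{defa:dtg_wdtg} must hold. Suppose some agent $\mathrm{v}_i$ is neither stubborn nor reachable-to a stubborn agent, and let $T$ be the set of all agents reachable from $\mathrm{v}_i$ in the graph of $M$. Then $T$ is closed under $M$ (no $M$-edge leaves $T$), so ordering $T$ first exhibits $M$ in block lower-triangular form with diagonal block $M|_T$; by assumption $T$ contains no stubborn agent, so $\lambda_j=1$ for all $j\in T$, making $M|_T$ row-stochastic with $M|_T\mathbf{1}=\mathbf{1}$ and hence $\rho(M)\ge\rho(M|_T)=1$, a contradiction. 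Thus every non-stubborn agent reaches a stubborn agent in the graph of $M$, which by the orientation remark means there is a directed influence path \emph{from} a stubborn agent \emph{to} that agent in $\mathcal{G}$; such a path may be taken simple, hence of length at most $n-1$.

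It remains to package this static reachability as a WDTG over $[0,\tau]$ with $\tau\le n$. Because the topology is constant, every layer $\mathcal{G}[k]$ is identical, so any $k$ serves as the required defected layer, and each static directed influence path lifts to a temporal $s$-path by realizing its successive edges at consecutive time steps $0,1,2,\dots$. A simple path uses at most $n-1$ edges, hence at most $n-1$ consecutive layers, so all required $s$-paths are contained in $[0,n)$; taking $\tau=n$ (indeed $\tau=n-1$ suffices) yields the claimed WDTG. I expect the only delicate points to be the bookkeeping on edge orientation—influence flows along $(\mathrm{v}_j,\mathrm{v}_i)$ while reachability in the graph of $M$ runs in the reverse sense—and verifying that the reachable set $T$ is genuinely $M$-closed so that $M|_T$ is exactly stochastic rather than merely sub-stochastic; both become routine once these conventions are fixed.
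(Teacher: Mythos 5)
Your proposal is correct, but it takes a genuinely different route from the paper. The paper's proof is essentially a citation: it appeals to the known characterization of time-invariant FJ stability from~\cite{parsegov2016novel} --- asymptotic stability holds if and only if every agent is stubborn or is influenced, directly or indirectly, by a stubborn agent --- and then observes that influence travels one hop per step, so the static reachability condition realizes a WDTG within at most \(n\) steps. You instead prove that characterization from first principles: writing \(M:=\Lambda W\), you reduce stability to \(\rho(M)<1\) via \(\Phi(t,\tau)=M^{t-\tau}\), obtain sufficiency from Lemma~\ref{thm: union of WDTG} together with \(\rho(M)\le\|M^{\tau}\|^{1/\tau}<1\), and obtain necessity by a Perron--Frobenius-style closed-class argument: if some agent is neither stubborn nor influenced by one, its set \(T\) of direct and indirect influencers is forward-closed in the graph of \(M\), every \(\lambda_j=1\) on \(T\), so the diagonal block \(M|_T\) is exactly row-stochastic and \(\rho(M)\ge\rho(M|_T)=1\), a contradiction. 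Your version buys self-containment and nails several details the paper leaves implicit: the orientation bookkeeping (reachability in the graph of \(M\) runs opposite to influence in \(\mathcal{G}\)), the fact that row-stochasticity of \(W\) plus Assumption~\ref{asm:asm1} forces \(\lambda_i>0\) so the graph of \(M\) and the influence graph of \(W\) share the same support, and the slightly sharper horizon \(\tau=n-1\) via simple paths. The paper's version buys brevity and is consistent with its framing of the corollary as a reinterpretation of a classical result. One small caution: your sufficiency leg leans on Lemma~\ref{thm: union of WDTG}, whose own proof in the paper is only sketched (as a limiting case of DTGs); since your WDTG consists of a single layer repeated, you could make that leg equally self-contained with a direct row-sum computation showing every row sum of \(M^{\tau}\) is strictly below one whenever each agent is stubborn or receives a length-at-most-\(\tau\) influence path from a stubborn agent --- a finite product of positive factors keeps each deficit strictly positive, and the maximum over the \(n\) rows stays strictly below one.
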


\begin{proof}
The result follows from~\cite{parsegov2016novel}, which establishes that asymptotic stability occurs if and only if each agent is either stubborn or influenced, directly or indirectly, by a stubborn agent. Since the system has \( n \) agents and influence can propagate at most one hop per step, any indirect influence must reach all agents within at most \( n \) steps. This satisfies the definition of a WDTG over some finite interval.
\end{proof}

\section{Boundedness of \(\omega\)-Limit Sets under TVFJ Dynamics}
\label{sec:omegaBoundedness}
The \(\omega\)-limit set collects all accumulation points of the opinion vector \( \mathbf{x}[t] \) as \( t \to \infty \), reflecting the long-term effects of both the initial condition and the evolving influence structure. In this section, we show that the \(\omega\)-limit set of~\eqref{eq:update_equation} is bounded and characterize how asymptotic stability shapes its structure.

\begin{lem}
\label{lem:influence-sum-upper-bound}
Let \( \Phi(t, \tau) \) denote the state transition matrix of the opinion dynamics~\eqref{eq:update_equation}, and let \( \Lambda[\tau] \) be the susceptibility matrix at time \( \tau \). Then, for all \( t \in \mathbb{N} \), the matrix
\[
\Sigma[t] := \sum_{\tau = 0}^{t-1} \Phi(t, \tau + 1)\, (I - \Lambda[\tau])
\]
is row-substochastic.
\end{lem}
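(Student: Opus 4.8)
The plan is to verify the two defining properties of a row-substochastic matrix separately: nonnegativity of every entry, and row sums bounded by one. The first is immediate, so the real content is the row-sum bound, which I would obtain from a conservation identity built around the all-ones vector $\mathbf{1}$. The key observation is that $\mathbf{1}$ is an invariant trajectory of the TVFJ map whenever the innate profile is $\mathbf{s}=\mathbf{1}$: since $W[\tau]$ is row-stochastic we have $W[\tau]\mathbf{1}=\mathbf{1}$, so $\Lambda[\tau]W[\tau]\mathbf{1}+(I-\Lambda[\tau])\mathbf{1}=\Lambda[\tau]\mathbf{1}+\mathbf{1}-\Lambda[\tau]\mathbf{1}=\mathbf{1}$. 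Hence if $\mathbf{x}[0]=\mathbf{s}=\mathbf{1}$ then $\mathbf{x}[t]=\mathbf{1}$ for all $t$, and substituting this into the solution formula~\eqref{eq:fj_solution} with $t_0=0$ yields the identity $\Phi(t,0)\mathbf{1}+\Sigma[t]\,\mathbf{1}=\mathbf{1}$, i.e. the vector of row sums of $\Sigma[t]$ equals $\mathbf{1}-\Phi(t,0)\mathbf{1}$.

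For nonnegativity, I would note that each factor $\Lambda[k]W[k]$ has nonnegative entries (a nonnegative diagonal times a nonnegative row-stochastic matrix), so every $\Phi(t,\tau+1)$, being a product of such factors, is entrywise nonnegative; and $(I-\Lambda[\tau])$ is diagonal with entries $1-\lambda_i[\tau]\in[0,1]\ge 0$. Thus each summand $\Phi(t,\tau+1)(I-\Lambda[\tau])$ is nonnegative and so is $\Sigma[t]$. To finish the row-sum bound I would observe that each $\Lambda[k]W[k]$ is sub-row-stochastic, since $\Lambda[k]W[k]\mathbf{1}=\Lambda[k]\mathbf{1}\le\mathbf{1}$ entrywise; products of sub-row-stochastic nonnegative matrices are again sub-row-stochastic, so $0\le\Phi(t,0)\mathbf{1}\le\mathbf{1}$. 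Combining with the identity above gives $\Sigma[t]\mathbf{1}=\mathbf{1}-\Phi(t,0)\mathbf{1}\in[0,\mathbf{1}]$ componentwise, which is exactly the statement that each row sum of $\Sigma[t]$ lies in $[0,1]$. Together with entrywise nonnegativity, this establishes that $\Sigma[t]$ is row-substochastic.

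I expect the main obstacle to be putting the conservation identity $\Phi(t,0)\mathbf{1}+\Sigma[t]\mathbf{1}=\mathbf{1}$ on fully rigorous footing rather than relying on the heuristic ``invariant trajectory'' picture, and doing so while respecting the composition convention $\Phi(t+1,\tau)=\Lambda[t]W[t]\,\Phi(t,\tau)$ implicit in~\eqref{eq:update_equation}. The clean way is a short induction on $t$ for fixed $\tau$: the base case $t=\tau$ is trivial since $\Phi(\tau,\tau)=I$ and the sum is empty, and the inductive step factors $\Lambda[t]W[t]$ out of the first $t-\tau$ terms, applies $W[t]\mathbf{1}=\mathbf{1}$ and the induction hypothesis to collapse them to $\Lambda[t]\mathbf{1}$, and then adds the new $k=t$ term $(I-\Lambda[t])\mathbf{1}$ to recover $\mathbf{1}$. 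Care with the indexing of $\Phi(t,\tau+1)$ versus $\Phi(t,\tau)$ in this telescoping is the only delicate point; everything else is routine.
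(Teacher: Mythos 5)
Your proposal is correct, but it takes a genuinely different route from the paper. The paper proves the lemma by induction on $t$: it establishes the recursion $\Sigma[t+1]=\Lambda[t]W[t]\,\Sigma[t]+(I-\Lambda[t])$ (using exactly the factorization $\Phi(t+1,\tau+1)=\Lambda[t]W[t]\,\Phi(t,\tau+1)$ that you flag as the delicate indexing point) and then observes that each row of this expression is a convex combination of a row of the substochastic matrix $W[t]\Sigma[t]$ and the corresponding row of $I$, hence substochastic. You instead derive the exact conservation identity $\Phi(t,0)\mathbf{1}+\Sigma[t]\mathbf{1}=\mathbf{1}$ by noting that $\mathbf{1}$ is an invariant trajectory of~\eqref{eq:update_equation} when $\mathbf{s}=\mathbf{1}$ and substituting into the solution formula~\eqref{eq:fj_solution}, handling nonnegativity separately. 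Both arguments are sound, and your rigorous fallback (inducting to verify the identity) telescopes in the same way as the paper's induction, so the mechanics ultimately coincide; but your version proves an equality where the paper proves only an inequality, and this strengthening has real content. It identifies the exact row-sum deficit of $\Sigma[t]$ as $\mathbf{1}-\Phi(t,0)\mathbf{1}$, so under asymptotic stability $\Sigma[t]\mathbf{1}\to\mathbf{1}$ and the limiting opinions are genuinely \emph{convex} combinations of the entries of $\mathbf{s}$ --- which is precisely what the paper's subsequent corollary (containment in $[\min_i s_i,\max_i s_i]$) actually requires, since a merely substochastic combination of positive $s_i$ could fall below $\min_i s_i$. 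The paper's induction is more self-contained (it does not presuppose the variation-of-constants formula~\eqref{eq:fj_solution}), while yours is shorter and quantitatively sharper; your explicit treatment of entrywise nonnegativity, which the paper leaves implicit in its definition of sub-row-stochasticity, is also a point in your favor.
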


\begin{proof}
We proceed by induction on \( t \). For the base case \( t = 1 \),
\[
\Sigma[1] = \Phi(1, 1)\, (I - \Lambda[0]) = I - \Lambda[0],
\]
which is row-substochastic.

Assume \( \Sigma[t] \) is row-substochastic. Then,
\begin{align*}
\Sigma[t+1] &= \sum_{\tau = 0}^{t} \Phi(t+1, \tau + 1)\, (I - \Lambda[\tau]) \\
&= \sum_{\tau = 0}^{t-1} \Phi(t+1, \tau + 1)\, (I - \Lambda[\tau]) + (I - \Lambda[t]) \\
&= \Lambda[t] W[t] \sum_{\tau = 0}^{t-1} \Phi(t, \tau + 1)\, (I - \Lambda[\tau]) + (I - \Lambda[t]) \\
&= \Lambda[t] W[t] \Sigma[t] + (I - \Lambda[t]).
\end{align*}
By the induction hypothesis, \( \Sigma[t] \) is row-substochastic. Since the final expression is a convex combination of the row-substochastic matrix \( W[t]\Sigma[t] \) and the identity matrix, it follows that \( \Sigma[t+1] \) is also row-substochastic.
\end{proof}

\begin{rem}
\label{rem:omega-limit}
Lemma~\ref{lem:influence-sum-upper-bound} guarantees that the general solution~\eqref{eq:fj_solution} remains bounded for all \( t \).
\end{rem}

\begin{cor}
If~\eqref{eq:update_equation} is asymptotically stable, then the corresponding \(\omega\)-limit set \( \omega \) satisfies
\[
\min_i s_i \leq x_i^* \leq \max_i s_i, \quad \forall x_i^* \in \omega.
\]
In particular, all limiting opinions lie within the convex hull of the innate opinion vector \( \mathbf{s} \).
\end{cor}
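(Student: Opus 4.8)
The plan is to combine the general solution formula~\eqref{eq:fj_solution} with the substochasticity established in Lemma~\ref{lem:influence-sum-upper-bound} and the asymptotic stability hypothesis. Under asymptotic stability, $\|\Phi(t,0)\|\to 0$, so the first term $\Phi(t,0)\mathbf{x}[0]$ in~\eqref{eq:fj_solution} vanishes as $t\to\infty$; hence every accumulation point $\mathbf{x}^*\in\omega$ is a limit point of the forced response $\Sigma[t]\,\mathbf{s}$ alone. The task therefore reduces to showing that $\Sigma[t]\,\mathbf{s}$ stays inside the convex hull of $\mathbf{s}$ for all $t$, so that its limit points do too.

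First I would establish the componentwise bound. Writing $\Sigma[t]=(\sigma_{ij}[t])$, Lemma~\ref{lem:influence-sum-upper-bound} gives $\sigma_{ij}[t]\ge 0$ and $\sum_j\sigma_{ij}[t]\le 1$. For each row $i$, the $i$-th component of $\Sigma[t]\mathbf{s}$ is $\sum_j\sigma_{ij}[t]\,s_j$, a subconvex combination of the innate opinions. Since each $s_j\ge\min_k s_k\ge 0$, nonnegativity of the weights immediately yields the lower bound; for the upper bound I would use $s_j\le\max_k s_k$ together with $\sum_j\sigma_{ij}[t]\le 1$ to get $\sum_j\sigma_{ij}[t]\,s_j\le(\max_k s_k)\sum_j\sigma_{ij}[t]\le\max_k s_k$. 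The key enabling fact is that the innate opinions are nonnegative (indeed $s_i\in[0,1]$), which lets a substochastic rather than genuinely stochastic matrix preserve the lower bound.

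The final step passes to the limit. For any $\mathbf{x}^*\in\omega$ there is a subsequence $t_k$ with $\mathbf{x}[t_k]\to\mathbf{x}^*$; decomposing via~\eqref{eq:fj_solution}, the initial-condition term tends to zero by stability, so $\mathbf{x}^*=\lim_k\Sigma[t_k]\mathbf{s}$. Since each iterate $\Sigma[t_k]\mathbf{s}$ obeys the closed-interval bounds $\min_i s_i\le(\Sigma[t_k]\mathbf{s})_i\le\max_i s_i$ derived above, and these intervals are closed, the componentwise limits inherit the same bounds, giving $\min_i s_i\le x_i^*\le\max_i s_i$ for every $i$, which is exactly the claimed containment in the convex hull of $\mathbf{s}$.

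I expect the main subtlety to be the lower bound rather than the upper bound: because $\Sigma[t]$ is only substochastic, the missing mass $1-\sum_j\sigma_{ij}[t]$ could in principle drag the weighted sum below $\min_i s_i$ if the $s_j$ were allowed to be negative. The clean resolution is to exploit $s_j\ge 0$, so that discarding mass can only decrease the sum toward $0\le\min_i s_i$, never below a negative floor; stating this nonnegativity dependence explicitly is the one place where care is needed, and it is harmless here since innate opinions live in $[0,1]$.
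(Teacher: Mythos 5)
Your overall route is the same as the paper's (decompose via \eqref{eq:fj_solution}, kill the \(\Phi(t,0)\mathbf{x}[0]\) term by asymptotic stability, bound the forced response using the substochasticity of \(\Sigma[t]\) from Lemma~\ref{lem:influence-sum-upper-bound}, then pass to the limit along subsequences), but your lower-bound step contains a genuine gap. Substochasticity plus \(s_j \ge 0\) gives only \((\Sigma[t]\mathbf{s})_i \ge \bigl(\min_k s_k\bigr)\sum_j \sigma_{ij}[t]\), and since the row sum may be strictly below \(1\), this does not yield \((\Sigma[t]\mathbf{s})_i \ge \min_k s_k\); your own closing remark (``discarding mass can only decrease the sum toward \(0\)'') proves exactly \(x_i^* \ge 0\), which is strictly weaker than the stated bound whenever \(\min_i s_i > 0\). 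Indeed your intermediate claim that \emph{each iterate} satisfies \(\min_i s_i \le (\Sigma[t_k]\mathbf{s})_i\) is false: take \(\Lambda[0]=\tfrac12 I\) and \(\mathbf{s}=(0.8,\dots,0.8)^\top\); then \(\Sigma[1]=I-\Lambda[0]=\tfrac12 I\) and \((\Sigma[1]\mathbf{s})_i = 0.4 < 0.8 = \min_i s_i\). The bound holds only in the limit, and establishing it requires using asymptotic stability a second time, not just to kill the initial-condition term.

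The repair is short: show that the row sums of \(\Sigma[t]\) converge to \(1\) under asymptotic stability. By the same recursion used in the proof of Lemma~\ref{lem:influence-sum-upper-bound}, namely \(\Sigma[t+1]=\Lambda[t]W[t]\Sigma[t]+(I-\Lambda[t])\) together with \(\Phi(t+1,0)=\Lambda[t]W[t]\Phi(t,0)\) and \(W[t]\mathbf{1}=\mathbf{1}\), one proves by induction the exact identity
\[
\Sigma[t]\,\mathbf{1} \;=\; \mathbf{1} - \Phi(t,0)\,\mathbf{1},
\]
so asymptotic stability forces \(\Sigma[t]\mathbf{1}\to\mathbf{1}\). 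Then along your subsequence, \((\Sigma[t_k]\mathbf{s})_i \ge \bigl(\min_j s_j\bigr)\bigl(\Sigma[t_k]\mathbf{1}\bigr)_i \to \min_j s_j\), which closes the lower bound; your upper-bound argument is fine as written (it needs only \(\max_j s_j \ge 0\)). For what it is worth, the paper's own one-line proof (``a row-substochastic linear combination of the entries of \(\mathbf{s}\)'') is silent on this same point; your write-up has the virtue of flagging the subtlety explicitly, but the resolution you chose (nonnegativity of \(\mathbf{s}\)) does not prove the claimed componentwise lower bound, whereas the missing-mass identity above does.
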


\begin{proof}
The result follows directly from Lemma~\ref{lem:influence-sum-upper-bound}. Under asymptotic stability, the first term in the general solution~\eqref{eq:fj_solution} vanishes as \( t \to \infty \). Moreover, Lemma~\ref{lem:influence-sum-upper-bound} asserts that
\[
\Sigma[t] := \sum_{\tau=0}^{t-1} \Phi(t, \tau+1)(I - \Lambda[\tau])
\]
is row-substochastic. Thus, the second term of the solution becomes a row-substochastic linear combination of the entries of \( \mathbf{s} \), which implies that all limiting opinions remain within the component-wise bounds of \( \mathbf{s} \).
\end{proof}

As a reference case, in the time-invariant FJ (TIFJ) model asymptotic stability reduces the \(\omega\)-limit set to a singleton~\cite{parsegov2016novel}, with the unique limiting opinion vector given by
\[
\mathbf{x}^* = (I - \Lambda W)^{-1} (I - \Lambda) \mathbf{s}.
\]

\section{Asymptotic Stability of the TVFJ Model}
\label{sec:TVFJStability}

In this section, we extend the algebraic stability conditions of \cite{proskurnikov2017opinion} by introducing graph-theoretic certificates that explicitly capture temporal connectivity. From Corollary~\ref{cor:stationary-stability}, in the time-invariant FJ model, asymptotic stability holds whenever the influence graph induced by \( W \) forms a WDTG over some finite interval \( [0, \tau] \). Equivalently, stability requires that the temporal interaction graph of the system contains infinitely many WDTGs. However, as the following example shows, this condition is not sufficient in the time-varying case.

\begin{ex}
\label{ex: WDTG is not sufficient for contraction}
Consider a three-agent network with a fixed row-stochastic influence matrix
\[
W = \frac{1}{3}
\begin{bmatrix}
1 & 1 & 1 \\
1 & 1 & 1 \\
1 & 1 & 1
\end{bmatrix}
= \tfrac{1}{3} \mathbf{1}\mathbf{1}^\top,
\]
and a time-varying susceptibility matrix \( \Lambda[t] = \lambda[t] I \), where
\[
\lambda[t] := 1 - \frac{1}{(t+1)^2}, \qquad t \in \mathbb{N}_0.
\]

The update matrix at each step is \( \Gamma[t] = \lambda[t] W \), so the state transition matrix is
\[
\Phi(t, 0) = \left( \prod_{k=0}^{t-1} \lambda[k] \right) W = \Lambda_t W,
\]
with
\[
\Lambda_t := \prod_{k=0}^{t-1} \left(1 - \tfrac{1}{(k+1)^2} \right).
\]

Since each \( \lambda[t] < 1 \), all agents are stubborn at all times. Moreover, because \( W \) is fully connected, the temporal graph \( \mathcal{G}[t] \) forms a WDTG at every step.

However, the deviation from full susceptibility is summable:
\[
\sum_{t=0}^\infty (1 - \lambda[t]) = \sum_{t=0}^\infty \tfrac{1}{(t+1)^2} = \tfrac{\pi^2}{6} < \infty,
\]
so the product \( \Lambda_t \) converges to a constant strictly greater than zero:
\[
\lim_{t \to \infty} \Lambda_t > 0.
\]
Consequently,
\[
\lim_{t \to \infty} \|\Phi(t, 0)\| = \tfrac{1}{3} \cdot \lim_{t \to \infty} \Lambda_t > 0,
\]
and the system fails to be asymptotically stable.
\end{ex}

As highlighted in Example~\ref{ex: WDTG is not sufficient for contraction}, the infinite product of contractions that are arbitrarily close to identity may not converge to zero. Hence, contraction alone is insufficient; the contractions must be \emph{uniformly bounded away from one} to guarantee the decay of the influence of the initial condition. This motivates the development of stronger structural conditions on the temporal interaction graph. In what follows, we present a \emph{graph-theoretic} sufficient condition for asymptotic stability, showing that the existence of infinitely many defected temporal graphs (DTGs) guarantees convergence.

\begin{thm}
\label{thm: asymptotic stability infinite DTGs}
Consider the dynamics in~\eqref{eq:update_equation}. If there exist infinitely many pairwise-disjoint finite intervals \( [t_i, t_{i+1}) \) such that each temporal graph \( \mathcal{G}_{t_i}^{t_{i+1}} \) is a \emph{defected temporal graph}, then the system is asymptotically stable.
\end{thm}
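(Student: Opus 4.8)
The goal is to show that $\lim_{t\to\infty}\|\Phi(t,\tau)\| = 0$ for every fixed $\tau \geq 0$. The plan is to chain the per-interval contraction estimate of Lemma~\ref{thm:contraction_defected} through the sub-multiplicativity of the $\ell_1$-induced norm, exactly in the spirit of Lemma~\ref{thm: union of DTG}. First I would fix $\tau$ and discard every DTG interval that begins before $\tau$; since the family $\{[t_i,t_{i+1})\}$ is infinite and pairwise disjoint, infinitely many of its members lie in $[\tau,\infty)$. For $t$ large enough to contain the first $\alpha(t)$ of these intervals, I would split $[\tau,t]$ into the DTG blocks together with the (possibly empty) ``filler'' blocks surrounding them. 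On each filler block the transition matrix is a product of matrices $\Lambda[k]W[k]$, each sub-row-stochastic, so its $\ell_1$-induced norm is at most $1$; on each DTG block Lemma~\ref{thm:contraction_defected} contributes a factor $1-\epsilon w^{\delta_a} < 1$. Sub-multiplicativity then yields
\[
\|\Phi(t,\tau)\| \;\le\; \prod_{a=1}^{\alpha(t)} \bigl(1-\epsilon w^{\delta_a}\bigr),
\qquad \alpha(t)\to\infty \text{ as } t\to\infty.
\]

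It then remains to argue that this product tends to $0$. Since $\prod_a(1-c_a)\to 0$ for $c_a\in[0,1)$ if and only if $\sum_a c_a=\infty$, the upper bound vanishes precisely when $\sum_a \epsilon w^{\delta_a}=\infty$, so it suffices to establish this divergence. The main obstacle is exactly the phenomenon exposed in Example~\ref{ex: WDTG is not sufficient for contraction}: a product of contraction factors that creep toward $1$ can converge to a strictly positive limit, so the mere recurrence of DTGs is not obviously enough—one must rule out $w^{\delta_a}\to 0$ too fast. I would resolve this by showing the factors are \emph{uniformly} bounded away from $1$.

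The key is that the exponent governing the contraction is controlled by the \emph{length of an influential path} rather than by the raw width of the interval: within any DTG, the influential path certifying a non-stubborn agent can be taken to be a simple temporal path, which traverses at most $n-1$ distinct agents and hence contributes at most $n-1$ edge factors of weight $\ge w$. Consequently each DTG block reduces to a certifying sub-block of width $\delta_a \le n$, giving the uniform estimate $1-\epsilon w^{\delta_a}\le 1-\epsilon w^{\,n}=:\gamma<1$. With this, $\|\Phi(t,\tau)\|\le \gamma^{\,\alpha(t)}\to 0$ as $t\to\infty$, which establishes asymptotic stability. The delicate point to get right is precisely this reduction from interval width to path length: I would need to verify that shortcutting any repeated node preserves both the temporal ordering and the $\epsilon$-stubborn, $w$-edge structure of the path, so that the uniform constant $\gamma$ is legitimate and does not secretly depend on $t$.
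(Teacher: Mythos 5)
Your first half coincides exactly with the paper's own argument: partition the horizon into DTG blocks and filler blocks, bound each filler by $1$ and each DTG block by Lemma~\ref{thm:contraction_defected}, and chain via sub-multiplicativity (this is precisely Lemma~\ref{thm: union of DTG}), arriving at $\|\Phi(t,\tau)\|\le\prod_{a=1}^{\alpha(t)}(1-\epsilon w^{\delta_a})$. You are also right — and here you are sharper than the paper, whose proof simply asserts that the product vanishes because $m(t)\to\infty$ — that this inference requires the factors to be uniformly bounded away from one, which the hypothesis ``finite intervals'' does not give, since the lengths $\delta_a$ may grow without bound.

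The genuine gap is your repair. An influential path cannot be shortcut to a simple path of at most $n-1$ edges: temporal edges are time-stamped and must occur at consecutive time steps, so deleting a revisited node leaves a temporal gap across which the stubbornness deficit need not survive; ``waiting'' at a node for one step is itself an edge (a self-loop that must be a $w$-edge) and costs a factor $w$ in the contraction. Concretely, take $n=3$ with $\mathrm{v}_1$ $\epsilon$-stubborn only at time $t_0$; for $t_0<t<t_d-1$ set $w_{11}[t]=w_{22}[t]=w$, $w_{13}[t]=w_{23}[t]=1-w$, keep $\mathrm{v}_3$ isolated ($w_{33}=1$, $\lambda_3=1$), and fire the single edge $w_{32}[t_d-1]=w$. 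This is a DTG over $[t_0,t_d)$, but the only influential path to $\mathrm{v}_3$ is $(\mathrm{v}_1,\mathrm{v}_2,t_0{+}1),(\mathrm{v}_2,\mathrm{v}_2,t_0{+}2),\dots,(\mathrm{v}_2,\mathrm{v}_3,t_d{-}1)$, and a direct computation of the row deficits $d_i[t+1]=1-\lambda_i[t]+\lambda_i[t]\sum_j w_{ij}[t]\,d_j[t]$ gives
\[
\|\Phi(t_d,t_0)\| \;=\; 1-\epsilon\,w^{\,t_d-t_0-1},
\]
so the per-block contraction genuinely scales with the elapsed time, and your uniform constant $1-\epsilon w^{\,n}$ is false — the node-repetition here is exactly the deficit being carried by a self-loop, which no shortcutting argument can remove. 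Worse, stringing such blocks with $\delta_i=i$ and identity dynamics in between yields infinitely many pairwise-disjoint DTGs with $\lim_{t\to\infty}\|\Phi(t,0)\|\ge\prod_{i}(1-\epsilon w^{\,i})>0$, so the theorem as literally stated needs an additional hypothesis: either uniformly bounded interval lengths $t_{i+1}-t_i\le\delta$ — which the paper does impose in the TBFJ analogue, Theorem~\ref{thm:asymptotic_stability_TBFJ} — or the divergence condition $\sum_a w^{\delta_a}=\infty$. Under either strengthening, your first display already completes the proof; without one, neither your argument nor the paper's can close.
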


\begin{proof}
Let the interval \( [0, t] \) be partitioned into \( m(t) \) disjoint DTG intervals and \( r(t) \) non-defected intervals. Denote the length of the \( i \)-th DTG interval by \( \delta_i = t_{i+1} - t_i \). By Lemma~\ref{thm: union of DTG}, the following inequality holds:
\begin{align}
\|\Phi(t+1, 0)\| \leq \prod_{k=1}^{r(t)} \|\Phi(t_k, t_{k+1})\| \cdot \prod_{i=1}^{m(t)} \bigl(1 - \epsilon w^{\delta_i} \bigr).
\label{eq: convergence equation}
\end{align}
Each non-defected segment satisfies \( \|\Phi(t_k, t_{k+1})\| \leq 1 \), while each DTG segment contributes a contraction factor strictly less than one. Since \( m(t) \to \infty \) as \( t \to \infty \), it follows that
\(
\lim_{t \to \infty} \|\Phi(t+1, 0)\| = 0,
\)
which establishes asymptotic stability.
\end{proof}

\begin{rem}
The asymptotic stability condition in~\cite{proskurnikov2017opinion} is closely related to our DTG-based theorem, but differs in abstraction. While~\cite{proskurnikov2017opinion} states stability via algebraic constraints on \(\{\Lambda[t],W[t]\)\,), our formulation encodes it \emph{graph-theoretically} through the temporal interaction topology. In particular, the DTG sufficient condition requires that, at some time layer, every agent is either strictly stubborn or temporally connected to a strictly stubborn agent via an influential path. This yields a transparent, interpretable view of influence propagation and helps identify and visualize stability-critical patterns in large-scale networks.
\end{rem}

\begin{ex}
Consider a five-agent opinion dynamics model with two alternating interaction networks:
\begin{align*}
A_1 = 
\begin{bmatrix}
1 & 0 & 0 & 0 & 0 \\
0 & 1 & 0 & 1 & 0 \\
0 & 0 & 1 & 0 & 1 \\
0 & 1 & 0 & 1 & 0 \\
0 & 0 & 1 & 0 & 1 
\end{bmatrix}, \quad
A_2 = 
\begin{bmatrix}
1 & 0 & 0 & 0 & 0 \\
1 & 0 & 1 & 0 & 0 \\
0 & 1 & 1 & 0 & 0 \\
0 & 0 & 0 & 1 & 1 \\
0 & 0 & 0 & 1 & 1
\end{bmatrix}.
\end{align*}

Each interaction matrix is paired with the same innate opinion vector and trust weight matrix:
\[
\mathbf{s} = 
\begin{bmatrix}
0.5 \\ 1.0 \\ 1.0 \\ 0.0 \\ 0.0
\end{bmatrix}, \quad
\hat{W} = 
\begin{bmatrix}
1       & 0       & 0       & 0       & 0 \\
\tfrac{1}{4} & \tfrac{1}{4} & \tfrac{1}{4} & \tfrac{1}{4} & 0 \\
0       & \tfrac{1}{3} & \tfrac{1}{3} & 0       & \tfrac{1}{3} \\
0       & \tfrac{1}{3} & 0       & \tfrac{1}{3} & \tfrac{1}{3} \\
0       & 0       & \tfrac{1}{3} & \tfrac{1}{3} & \tfrac{1}{3}
\end{bmatrix}.
\]

The actual weight matrix at time \( t \) is
\[
w_{ij}[t] := \frac{a_{ij}[t]\, \hat{w}_{ij}}{\sum_{k \in \mathcal{N}_i[t]} a_{ik}[t]\, \hat{w}_{ik}},
\]
where \( A[t] \in \{A_1, A_2\} \) and \( \mathcal{N}_i[t] \) is the neighbor set of agent \( v_i \) at time \( t \). We specify:
\begin{itemize}
    \item \textbf{Network 1:} Agent \( v_1 \) is completely isolated, while the remaining agents interact in fixed local patterns. The susceptibility matrix is
    \[
    \Lambda_1 = \text{diag}(0,~1,~1,~1,~1).
    \]

    \item \textbf{Network 2:} Agent \( v_1 \) acts as a strictly stubborn external source influencing agent \( v_2 \). The susceptibility values vary with time as
    \[
    \Lambda_2 = \text{diag}(0,~\lambda_2[t],~1,~\lambda_4[t],~1),
    \]
    where \( \lambda_2[t] = 0.9 - \tfrac{1}{1+k} \), \( \lambda_4[t] = 1 - \tfrac{1}{1+k} \), and \( k \) denotes the number of switches into Network~2 up to time \( t \).
\end{itemize}

Agents switch to Network~2 at times
\[ t_k = t_{k-1} + \lfloor \log(t_{k-1}+1) \rfloor +d\]

interacting for a fixed duration \( d \), where \( \lfloor \cdot \rfloor \) denotes the floor operator. The opinion dynamics follow the TVFJ model:
\[
\mathbf{x}[t+1] = \Lambda_{\sigma(t)} W_{\sigma(t)} \mathbf{x}[t] + \bigl(I - \Lambda_{\sigma(t)}\bigr)\mathbf{s},
\]
where \( \sigma(t) \in \{1, 2\} \) indicates the active network.

As \( t, t_k \to \infty \), the dynamics increasingly resemble the French--DeGroot model, which converges to a consensus depending on the initial condition. Figure~\ref{fig: Example AS TVFJ Fig1} shows consensus-like behavior among agents \( v_2 \) through \( v_4 \) for \( d = 2 \). However, global consensus is not reached due to influence propagating through the strictly stubborn agent \( v_2 \).

Figure~\ref{fig: Example AS TVFJ Graph} illustrates that \( \mathcal{G}_{t_k + d}^{t_k + d + 1} \) forms a DTG at every switching time \( t_k \), which guarantees asymptotic stability. Consequently, the \(\omega\)-limit set becomes independent of the initial condition. Figure~\ref{fig: Example AS TVFJ Fig2} shows zero-input opinion trajectories, \(\Phi(t, t_0)\,\mathbf{x}[t_0]\), under different initial conditions. All trajectories converge to zero, confirming that the long-term behavior depends solely on the model parameters \( \mathbf{s} \), \( \Lambda[t] \), and \( W[t] \), rather than on the initial state.

\begin{figure}[h]
\centering
\includegraphics[width=0.35\textwidth]{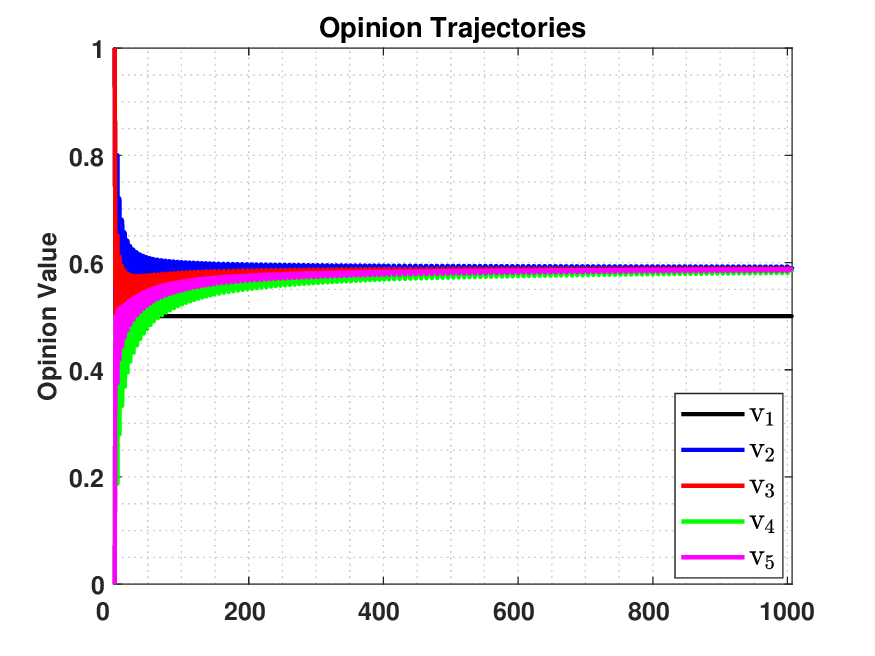}
\caption{Opinion trajectories of agents \( v_1 \) to \( v_5 \) under alternating networks with fixed Network~2 duration \( d = 2 \). The dynamics resemble French--DeGroot consensus due to increasing switching intervals.}
\label{fig: Example AS TVFJ Fig1}
\end{figure}

\begin{figure}[h]
\centering
\includegraphics[width=0.25\textwidth]{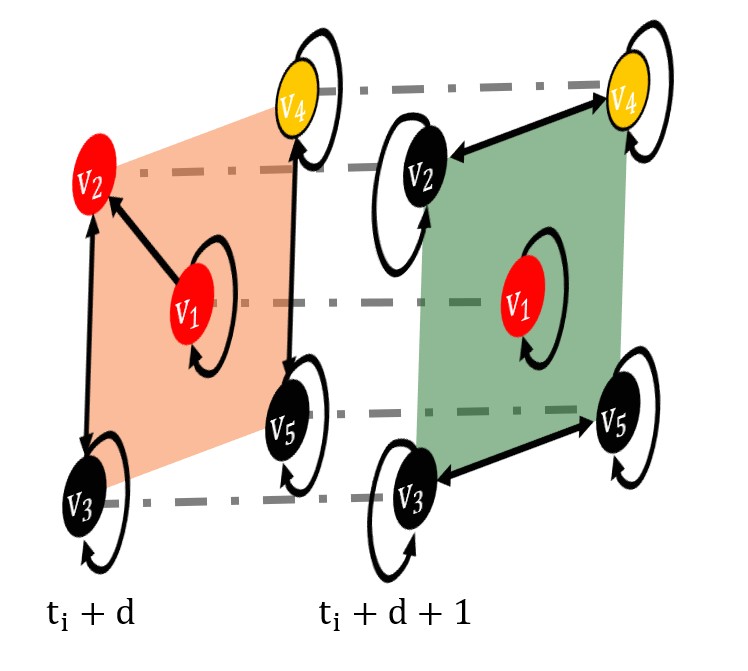}
\caption{Interaction topology at time \( t_k + d \), where \( \mathcal{G}_{t_k + d}^{t_k + d + 1} \) forms a DTG. The presence of this DTG in every cycle guarantees asymptotic stability, regardless of initial conditions.}
\label{fig: Example AS TVFJ Graph}
\end{figure}

\begin{figure}[h]
\centering
\includegraphics[width=0.4\textwidth]{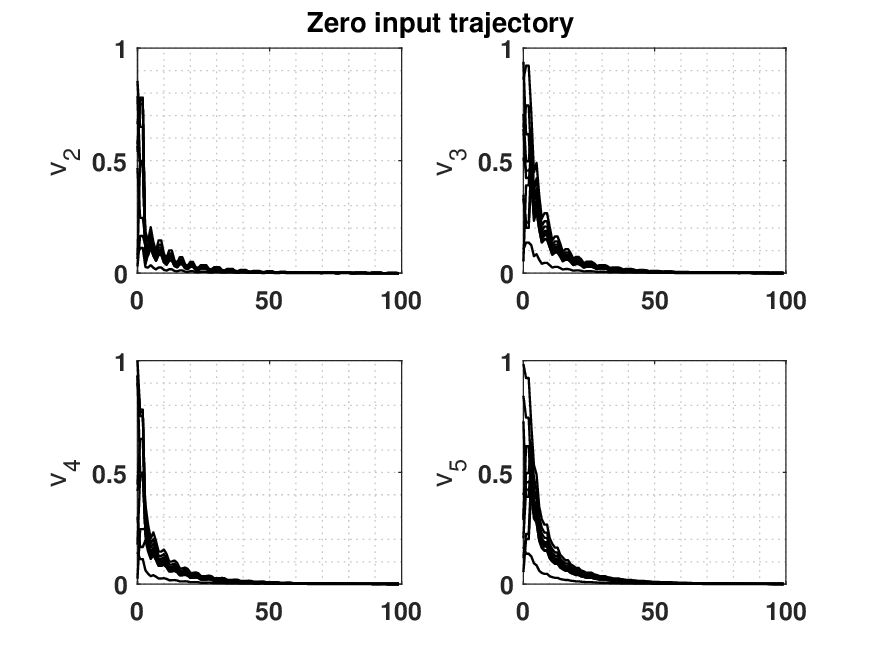}
\caption{Zero-input opinion trajectories under different initial conditions. All trajectories converge to zero, showing asymptotic stability and independence from initial conditions.}
\label{fig: Example AS TVFJ Fig2}
\end{figure}
\label{ex: Example AS TVFJ}
\end{ex}

\subsection{Semi-Periodic Defected Networks}

Consider a recommender system that persistently exposes agents to influential content, thereby shaping public opinion~\cite{bakshy2015exposure}. In this setting, it is natural to assume the existence of a fixed integer \( p_s > 0 \) such that, for every \( t \), the temporal graph \( \mathcal{G}_t^{t+p_s} \) forms a defected temporal graph. Opinion dynamics under this condition are termed a \emph{semi-periodic defected network}.

\begin{thm}
Let the opinion dynamics system~\eqref{eq:update_equation} evolve under a semi-periodic defected network with period length \( p_s \). Then, the system is \emph{exponentially stable} with growth factor \( c \) and exponential decay rate \( \gamma \) given by:
\[
c = \frac{1}{(1 - \varepsilon w^{p_s})^{p_s}}, \quad \text{and} \quad \gamma \leq \sqrt[p_s]{1 - \varepsilon w^{p_s}}.
\]
\label{thm: UES under semi-periodic}
\end{thm}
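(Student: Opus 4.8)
The plan is to reduce exponential stability to a uniform per-period contraction and then amortize that contraction over an arbitrary horizon. Write $\beta := 1 - \varepsilon w^{p_s}$, which lies in $(0,1)$. First I would show that every window of length $p_s$ contracts by the single factor $\beta$: by the semi-periodic hypothesis, $\mathcal{G}_s^{s+p_s}$ is a DTG for every $s$, so Lemma~\ref{thm:contraction_defected} gives $\|\Phi(s+p_s,s)\| \le 1 - \varepsilon w^{\delta} \le \beta$, where the final inequality uses $\delta \le p_s$ together with $w \le 1$ (hence $w^{\delta} \ge w^{p_s}$). This produces an $s$-independent contraction factor valid for each full period.

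Next I would fix $t \ge \tau$ and split the horizon as $t - \tau = q\,p_s + r$ with $q = \lfloor (t-\tau)/p_s \rfloor$ and $0 \le r < p_s$. Partitioning $[\tau,t]$ into the $q$ consecutive period-blocks $[\tau + ip_s,\ \tau + (i+1)p_s]$ followed by a single leftover block $[\tau + qp_s,\ t]$ of length $r$, and invoking the sub-multiplicativity of the induced norm exactly as in Lemma~\ref{thm: union of DTG}, I obtain
\[
\|\Phi(t,\tau)\| \;\le\; \|\Phi(t,\tau+qp_s)\| \prod_{i=0}^{q-1} \|\Phi(\tau+(i+1)p_s,\ \tau+ip_s)\| \;\le\; \beta^{\,q}.
\]
Here each full block contributes at most $\beta$, while the leftover block contributes at most $1$: every $\Lambda[k]W[k]$ is sub-row-stochastic (its $i$-th row sums to $\lambda_i[k] \le 1$), so any product of such factors has induced norm at most one.

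The last step turns the integer exponent $\beta^{q}$ into genuine geometric decay in $t-\tau$. Using $q = \lfloor (t-\tau)/p_s \rfloor \ge (t-\tau)/p_s - 1 \ge (t-\tau)/p_s - p_s$ and the fact that $x \mapsto \beta^{x}$ is decreasing (as $\beta \in (0,1)$),
\[
\beta^{\,q} \;\le\; \beta^{\,(t-\tau)/p_s - p_s} \;=\; \frac{1}{(1-\varepsilon w^{p_s})^{p_s}}\;\bigl(\sqrt[p_s]{1-\varepsilon w^{p_s}}\bigr)^{t-\tau},
\]
which is precisely $c\,\gamma^{t-\tau}$ with the stated constants. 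The main obstacle is the leftover block of length $r < p_s$: it need not be a DTG and therefore need not contract, so it cannot be folded into the decay rate and must instead inflate the growth constant. Bounding it crudely by $1$ via sub-row-stochasticity, and accounting for the incomplete period through the floor in the exponent, is exactly what forces $c>1$ and fixes the $p_s$-th-root decay rate. I would also remark that the same bookkeeping in fact supports the sharper growth factor $c = (1-\varepsilon w^{p_s})^{-1}$, of which the stated $c$ is a valid over-estimate.
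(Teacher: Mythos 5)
Your proof is correct and takes essentially the same route as the paper's: decompose $t-\tau$ into $q$ full $p_s$-length windows plus a remainder, apply Lemma~\ref{thm:contraction_defected} to each full window to get the uniform factor $\beta = 1-\varepsilon w^{p_s}$, bound the leftover block by $1$ via sub-row-stochasticity and submultiplicativity, and convert $\beta^{q}$ into $c\,\gamma^{t-\tau}$ using $q \ge (t-\tau)/p_s - 1$. Your closing observation is also accurate: the stated growth factor $c = (1-\varepsilon w^{p_s})^{-p_s}$ is a valid but loose over-estimate, since the same bookkeeping already yields the sharper $c = (1-\varepsilon w^{p_s})^{-1}$.
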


\begin{proof}
Let \( t - \tau = q p_s + r \), where \( q \in \mathbb{N} \), \( 0 \leq r < p_s \). Then:
\[
\Phi(t, \tau) = \Phi(t, \tau + q p_s) \prod_{k=0}^{q-1} \Phi(\tau + (k+1)p_s,\, \tau + k p_s).
\]

By the semi-periodic assumption, each interval \( [\tau + k p_s,\, \tau + (k+1) p_s] \) contains at least one defected layer. Therefore, by Lemma~\ref{thm:contraction_defected}, each product term satisfies:
\[
\|\Phi(\tau + (k+1)p_s,\, \tau + k p_s)\| \leq 1 - \varepsilon w^{p_s}.
\]

Since the product norm is submultiplicative and \(\|\Phi(t, \tau + q p_s)\| \leq 1\), we get:
\[
\|\Phi(t, \tau)\| \leq (1 - \varepsilon w^{p_s})^q.
\]

Since \( q = \frac{t - \tau -r}{p_s} \), and \(0 \leq r < p_s\) we conclude:
\[
\|\Phi(t, \tau)\|\leq \frac{1}{(1 - \varepsilon w^{p_s})^{p_s}}\left(\sqrt[p_s]{1 - \varepsilon w^{p_s}}\right)^{t-\tau}.
\]

Hence, the exponential stability condition is satisfied with:
\[
c = \frac{1}{(1 - \varepsilon w^{p_s})^{p_s}}, \quad \gamma = \sqrt[p_s]{1 - \varepsilon w^{p_s}}.
\]
\end{proof}

\section{Asymptotic Stability of Trust–Based Opinion Dynamics}
\label{sec:TrustBasedStability}
Theorem~\ref{thm: asymptotic stability infinite DTGs} shows that the TVFJ model is asymptotically stable when the temporal graph contains infinitely many DTGs. We now prove that the trust–based model of Definition~\ref{defa:Trust_Based_TVFJ} attains asymptotic stability under a strictly weaker requirement: the existence of infinitely many WDTGs. Intuitively, the trust-based update induces contraction even when the temporal network is connected only via \(s\)-paths.

\begin{thm}
\label{thm:asymptotic_stability_TBFJ}
Consider the TBFJ opinion dynamics in Definition~\ref{defa:Trust_Based_TVFJ}. 
If there exist infinitely many disjoint finite intervals \( [t_i,t_{i+1}) \) such that each temporal graph \( \mathcal{G}_{t_i}^{t_{i+1}} \) is a \emph{weakly defected temporal graph}, and and moreover, the length of each such interval is uniformly bounded, i.e., 
\[t_{i+1}-t_i \leq \delta \quad \text{for some finite \(\delta >0\)}\]
Then the system is asymptotically stable.
\end{thm}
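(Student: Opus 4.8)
The plan is to exploit the finite-valued structure of the trust-based updates to upgrade each WDTG into a genuine DTG with \emph{uniform} constants, and then invoke the contraction machinery already developed for DTGs.

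First, I would establish two uniform bounds. Because $\lambda_i[t] = f_i(\mathcal{N}_i[t])$ depends only on the neighbor set, and each neighbor set is a subset of the finite vertex set $\mathcal{V}$, the susceptibilities range over the finite collection $\{f_i(S) : S \subseteq \mathcal{V}\}$. Setting $\epsilon := 1 - \max\{f_i(S) : f_i(S) < 1\}$ over all agents and admissible neighbor sets yields a single $\epsilon > 0$ such that every stubborn agent is automatically $\epsilon$-stubborn. Analogously, each nonzero weight equals $w_{ij}[t] = \hat{w}_{ij}/\sum_{k \in \mathcal{N}_i[t]} \hat{w}_{ik}$, a value drawn from the finite set indexed by the triple $(i, j, \mathcal{N}_i[t])$; its minimum over all positive realizations is a uniform $w \in (0,1]$, so every temporal edge is a $w$-edge.

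With these constants in hand, the second step is immediate: since every stubborn agent is $\epsilon$-stubborn and every $s$-path is an influential-path, each WDTG interval $[t_i, t_{i+1})$ is in fact a DTG with the same $\epsilon$ and $w$. Lemma~\ref{thm:contraction_defected} then gives $\|\Phi(t_{i+1}, t_i)\| \leq 1 - \epsilon w^{\delta_i}$, and using the uniform length bound $\delta_i \leq \delta$ together with $w \in (0,1]$ (so that $w^{\delta_i} \geq w^{\delta}$), each such interval contracts by at least the fixed factor $\rho := 1 - \epsilon w^{\delta} < 1$.

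Finally, I would partition $[\tau, t]$ into the DTG intervals and the intervening gaps, bound each gap's transition matrix by $1$ in the $\ell_1$-induced norm (these are sub-row-stochastic), and apply submultiplicativity exactly as in the proof of Theorem~\ref{thm: asymptotic stability infinite DTGs}. Since infinitely many disjoint WDTG intervals occur, the count $m(t)$ of DTG factors accumulated by time $t$ diverges, so $\|\Phi(t,\tau)\| \leq \rho^{m(t)} \to 0$, giving asymptotic stability. The main obstacle---and the crux of the argument---is the first step: it is precisely the uniformity of $\epsilon$ and $w$, guaranteed here by the finite range of $f_i$ and the fixed trust matrix, that rules out the degeneracy of Example~\ref{ex: WDTG is not sufficient for contraction}, where the WDTG contraction factors drifted toward $1$. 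The uniform interval-length bound $\delta$ plays the complementary role of preventing $w^{\delta_i}$ from collapsing to zero as intervals grow.
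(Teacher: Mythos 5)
Your proof is correct, but it takes a genuinely different route from the paper's. The paper argues by finiteness and pigeonhole: since $A[t]$ ranges over the finite set $\{0,1\}^{n\times n}$ and the WDTG intervals have length at most $\delta$, the collection $\mathcal{A}$ of adjacency-matrix sequences realizing a WDTG is finite; each sequence type yields, via Lemma~\ref{thm: union of WDTG}, some contraction factor $r_i<1$, and since infinitely many WDTG intervals are drawn from only $b=|\mathcal{A}|$ types, at least one $\alpha_i(t)\to\infty$, forcing $\|\Phi(t,0)\|\le r_1^{\alpha_1(t)}\cdots r_b^{\alpha_b(t)}\to 0$. You instead extract \emph{uniform} constants from the same finite-valued structure: the finite range of $f_i$ over $2^{\mathcal{V}}$ gives a single $\epsilon>0$ making every stubborn agent $\epsilon$-stubborn (note the set $\{f_i(S): f_i(S)<1\}$ is nonempty since $f_i(\emptyset)=0$, so your $\epsilon$ is well defined), and the fixed trust matrix $\hat{W}$ gives a minimum positive realizable weight $w$ making every temporal edge a $w$-edge; hence in the TBFJ setting every s-path is an influential path and every WDTG is automatically a DTG. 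Lemma~\ref{thm:contraction_defected} together with $\delta_i\le\delta$ and $w\in(0,1]$ then gives the uniform factor $\rho=1-\epsilon w^{\delta}<1$, and the argument of Theorem~\ref{thm: asymptotic stability infinite DTGs} applies verbatim. Both routes need the uniform length bound---the paper to make $\mathcal{A}$ finite, you to keep $w^{\delta_i}\ge w^{\delta}$---but your version buys two things the paper's does not: an explicit quantitative contraction rate per WDTG interval (the paper's $r_i$ are only asserted to exist, and Lemma~\ref{thm: union of WDTG}, on which the paper leans, is itself justified only by an informal limiting argument, whereas Lemma~\ref{thm:contraction_defected} carries a full proof), and a reduction that would immediately upgrade to exponential stability with explicit constants whenever the gaps between WDTG intervals are also bounded. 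The paper's pigeonhole, conversely, avoids any appeal to $\epsilon$- or $w$-uniformity and so would survive formulations of TBFJ where such constants are harder to name, but within Definition~\ref{defa:Trust_Based_TVFJ} the two arguments exploit exactly the same finiteness and your reduction is the more transparent of the two.
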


\begin{proof}
Let \( n \) be the number of agents. Since \( A[t] \in \{0,1\}^{n \times n} \), the number of distinct adjacency matrices is finite. Given the fixed trust matrix \( \hat{W} \), each adjacency matrix \( A[t] \) determines a unique influence matrix \( W[t] \), and each agent computes its susceptibility \( \lambda_i[t] \in [0,1] \) based on its reachable neighbors.

Define the set
\[
\mathcal{A} := \left\{ \{A[k]\}_{k=t_i}^{t_{i+1}-1} \,\middle|\, \mathcal{G}_{t_i}^{t_{i+1}} \text{ is a WDTG} \right\},
\]
that contains all uniformly bounded length sequences of adjacency matrices such that the resulting temporal interaction graph over that interval is a weakly defected temporal graph. It is clear that \( \mathcal{A} \) is a countable and bounded set.

Let \( b = |\mathcal{A}| \) denote the (finite) cardinality of the set \( \mathcal{A} \). According to Lemma ~\ref{thm: union of WDTG}, for each sequence \(  \{A[k]\}_{k=t_i}^{t_{i+1}-1} \in \mathcal{A} \), there exists a constant \( r_i \in (0,1) \) such that the corresponding state transition matrix in the interval satisfies
\[
\|\Phi(t_{i+1}, t_i)\| \leq r_i.
\]

Let \( \alpha_i(t) \) denote the number of occurrences of the \(i\)-th sequence in \( \mathcal{A} \) up to time \( t \). Then for all \( t \),
\[
\|\Phi(t, 0)\| \leq r_1^{\alpha_1(t)} \cdot r_2^{\alpha_2(t)} \cdots r_b^{\alpha_b(t)}.
\]

Since the sequence of time intervals \( [t_i, t_{i+1}) \) is infinite and the number of distinct sequences is finite, it follows that for some \( i \), \( \alpha_i(t) \to \infty \) as \( t \to \infty \). Therefore,
\[
\lim_{t \to \infty} \|\Phi(t, 0)\| = 0.
\]
\end{proof}

The time-invariant FJ (TIFJ) and periodically switching FJ (PSFJ) models are special cases of the trust-based opinion dynamics framework. Under asymptotic stability, the \( \omega \)-limit set of TIFJ is fully determined by the model parameters. In what follows, we show that the same property holds for PSFJ models as well.

\subsection{Asymptotic Behavior of the PSFJ Model}

A particularly structured and analytically tractable subclass of TBFJ models emerges when the adjacency matrix \( A[t] \) exhibits a periodic pattern with a fixed period \( p \). Formally, we assume there exists a finite sequence of adjacency matrices \(\{A_l\}_{l=1}^p\) repeating periodically over time, i.e.,
\[
A[t] = A_{\langle t \rangle_p}, \quad \text{with} \quad \langle t \rangle_p = (t \bmod p) + 1.
\]

According to Definition~\ref{defa:Trust_Based_TVFJ}, each adjacency matrix \( A_l \), for \( l \in \{1,\dots,p\} \), corresponds uniquely to a pair \(\{W_l, \Lambda_l\}\) that characterizes the interaction weights and susceptibility profiles during the \( l \)-th phase of each periodic cycle. Under these assumptions, the opinion dynamics given by~\eqref{eq:update_equation} simplify to a periodically switching model:
\begin{equation}
    \mathbf{x}[t+1] = \Lambda_{\langle t \rangle_p} W_{\langle t \rangle_p} \mathbf{x}[t] + \left(I - \Lambda_{\langle t \rangle_p}\right)\mathbf{s}.
    \label{eq: periodic update equation}
\end{equation}

A key advantage of this periodic structure is that it allows the time-varying dynamics~\eqref{eq: periodic update equation} to be decomposed into a collection of \( p \) distinct linear time-invariant (LTI) subsystems.

\begin{thm}
\label{thm: LTI representation}
Consider the PSFJ opinion dynamics~\eqref{eq: periodic update equation} with period \( p \). This system can be decoupled into \( p \) linear time-invariant subsystems. Specifically, for each \( l \in \{1, 2, \dots, p\} \), define the subsystem state \(\mathbf{x}_l[k]\) at scaled discrete time \( k = \frac{t}{p} \). Then, each subsystem evolves according to:
\begin{equation}
    \mathbf{x}_l[k+1] = M_l\,\mathbf{x}_l[k] + N_l\,\mathbf{s},
    \label{eq: TIL model}
\end{equation}
where
\begin{align}
    M_l &= \prod_{j=0}^{p-1} \Lambda_{\langle l+j\rangle_p}\,W_{\langle l+j\rangle_p}, 
    \label{eq: TIL model_C3}\\[6pt]
    N_l &= \sum_{j=0}^{p-1}\left(\prod_{r=j+1}^{p-1} \Lambda_{\langle l+r\rangle_p}\,W_{\langle l+r\rangle_p}\right)\left(I - \Lambda_{\langle l+j\rangle_p}\right).
    \label{eq: TIL model_C4}
\end{align}
\end{thm}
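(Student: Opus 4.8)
The plan is to obtain the $p$-step composite map of recursion~\eqref{eq: periodic update equation} by unrolling it over one full period and reading off the linear and affine coefficients. First I would introduce the per-step operators $\Gamma_m := \Lambda_{\langle m\rangle_p} W_{\langle m\rangle_p}$, so that the update at any time $t$ reads $\mathbf{x}[t+1] = \Gamma_t\,\mathbf{x}[t] + (I-\Lambda_{\langle t\rangle_p})\mathbf{s}$. Because the schedule is $p$-periodic, $\Gamma_t$ depends only on the phase $\langle t\rangle_p$; hence for a fixed phase $l$ the operators encountered over one period starting at a time $t_0$ with $\langle t_0\rangle_p = l$ are exactly $\Gamma_{t_0+j} = \Lambda_{\langle l+j\rangle_p} W_{\langle l+j\rangle_p}$ for $j=0,\dots,p-1$.

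Next I would fix the sampling that defines each subsystem. Phase $l$ recurs precisely at the times $t = kp+(l-1)$, so I set $\mathbf{x}_l[k] := \mathbf{x}[kp+l-1]$, which gives $\mathbf{x}_l[k+1] = \mathbf{x}[kp+l-1+p]$, the state exactly one period later. Iterating the recursion $p$ times from $t_0 = kp+l-1$ and telescoping the contributions of the affine term produces
\[
\mathbf{x}_l[k+1] = \Big(\prod_{j=0}^{p-1}\Gamma_{t_0+j}\Big)\mathbf{x}_l[k] + \sum_{j=0}^{p-1}\Big(\prod_{r=j+1}^{p-1}\Gamma_{t_0+r}\Big)\big(I-\Lambda_{\langle l+j\rangle_p}\big)\mathbf{s},
\]
where all matrix products are ordered with the highest index on the left (the convention used for $\Phi(t,\tau)$) and the empty product at $j=p-1$ is the identity. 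Substituting $\Gamma_{t_0+j} = \Lambda_{\langle l+j\rangle_p} W_{\langle l+j\rangle_p}$ identifies the linear coefficient with $M_l$ in~\eqref{eq: TIL model_C3} and the $\mathbf{s}$-coefficient with $N_l$ in~\eqref{eq: TIL model_C4}; equivalently, $M_l = \Phi(t_0+p,\,t_0)$.

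I would close by noting that this composite recursion holds for every $k$ with the \emph{same} matrices $M_l$ and $N_l$, since both depend only on the phase $l$ and not on $k$. Thus each phase-sampled sequence $\{\mathbf{x}_l[k]\}_k$ obeys its own linear time-invariant recursion~\eqref{eq: TIL model}, and the $p$ sampled sequences jointly reconstruct the full trajectory, establishing the claimed $p$-LTI decomposition.

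The argument is essentially careful bookkeeping, so I do not anticipate a genuine obstacle. The one point demanding attention is the index convention: ensuring that the order of matrix multiplication in the unrolled products matches the left-to-right convention of $M_l$ and $N_l$, and that the modular index $\langle l+j\rangle_p$ correctly tracks the phase as $j$ crosses a period boundary. Once the ordering and the empty-product convention at $j=p-1$ are fixed, the coefficients match term by term.
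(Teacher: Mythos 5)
Your proposal is correct and follows essentially the same route as the paper's proof: unroll the recursion~\eqref{eq: periodic update equation} over one full period starting at phase \( l \), telescope the affine \(\mathbf{s}\)-contributions (the paper does this by invoking the general solution~\eqref{eq:fj_solution}), and sample at period multiples to obtain the time-invariant map with coefficients \( M_l \) and \( N_l \). If anything, your version is slightly more careful than the paper's, since you make explicit the sampling \( \mathbf{x}_l[k] = \mathbf{x}[kp+l-1] \), the highest-index-on-the-left ordering of the matrix products, and the empty-product convention at \( j = p-1 \), all of which the paper leaves implicit.
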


\begin{proof}
Let \(\mathbf{x}_l[t]\) denote the state trajectory starting at time step \( t \) under the switching mode initiated by the pair \( \{W_l, \Lambda_l\} \). Expanding the dynamics explicitly over one full period yields:
\begin{align*}
    &\mathbf{x}_l[t+1] = \Lambda_l W_l \mathbf{x}_l[t] + (I-\Lambda_l)\mathbf{s},\\
    &\mathbf{x}_l[t+2] = \Lambda_{\langle l+1 \rangle_p} W_{\langle l+1 \rangle_p} \mathbf{x}_l[t+1] + (I-\Lambda_{\langle l+1 \rangle_p})\mathbf{s},\\
    &\hspace{4cm}\vdots\\[-5pt]
    &\mathbf{x}_l[t+p] = \Lambda_{\langle l-1 \rangle_p} W_{{\langle l-1 \rangle_p}}\mathbf{x}_l[t+p-1]+(I-\Lambda_{{\langle l-1 \rangle_p}})\mathbf{s}.
\end{align*}
Using the explicit solution~\eqref{eq:fj_solution}, we compactly relate \(\mathbf{x}_l[t+p]\) to \(\mathbf{x}_l[t]\) as:
\[
\mathbf{x}_l[t+p] = {M}_l\mathbf{x}_l[t]+{N}_l\mathbf{s},
\]
where matrices \({M}_l\) and \({N}_l\) are defined in \eqref{eq: TIL model_C3} and \eqref{eq: TIL model_C4}, respectively.

By introducing the scaled time variable \( k = \frac{t}{p} \), the dynamics equivalently become:
\[
\mathbf{x}_l[k+1] = {M}_l \mathbf{x}_l[k] + {N}_l\mathbf{s}.
\]
This explicitly demonstrates the decomposition into \( p \) LTI subsystems.
\end{proof}

\begin{thm}
\label{thm: exponential stability of PSFJ}
Consider the opinion dynamics~\eqref{eq: periodic update equation} with period \( p \). Suppose the temporal graph \( \mathcal{G}_0^{p} \) is a WDTG. Then, the system is exponentially stable, with constants of growth factor \( c \) and exponential decay rate \( \gamma \) 
\[
c = \frac{1}{\|\Phi(p,0)\|^p}, \quad \gamma = \sqrt[p]{\|\Phi(p,0)\|}.
\]

Moreover, the \(\omega\)-limit set contains at most \( p \) points, independent of initial conditions. Each asymptotic point is explicitly characterized by:
\[
\mathbf{x}_l^* = (I - {M}_l)^{-1} {N}_l \, \mathbf{s}, \quad l = 1, 2, \dots, p,
\]
where \( M_l \) and \( N_l \) are defined as in Theorem~\ref{thm: LTI representation}.
\end{thm}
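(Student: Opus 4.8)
The plan is to prove the two assertions separately, establishing the exponential bound first, since the characterization of the $\omega$-limit set rests on the monodromy matrices being Schur. For the exponential bound I would set $\rho := \|\Phi(p,0)\|$ and immediately note that, because $\mathcal{G}_0^p$ is a WDTG, Lemma~\ref{thm: union of WDTG} gives $\rho < 1$. The structural fact I would exploit is periodicity: since the factors $\Lambda_{\langle t\rangle_p}W_{\langle t\rangle_p}$ repeat with period $p$, the product over any period \emph{aligned} with the original phase is the same matrix, i.e. $\Phi((k+1)p,kp)=\Phi(p,0)$ for every $k$, so $\|\Phi((k+1)p,kp)\|\le\rho$. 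To bound $\|\Phi(t,\tau)\|$ for arbitrary $t\ge\tau\ge 0$, I would anchor the decomposition to multiples of $p$: with $a=\lceil\tau/p\rceil p$ and $b=\lfloor t/p\rfloor p$, write $\Phi(t,\tau)=\Phi(t,b)\,\Phi(b,a)\,\Phi(a,\tau)$, bound the two fractional end blocks by $1$ (each factor $\Lambda[k]W[k]$ has $\ell_1$-induced norm $\max_i\lambda_i[k]\le 1$, hence is nonexpansive), and bound the aligned middle block by $\|\Phi(p,0)^{(b-a)/p}\|\le\rho^{(b-a)/p}$. Since $(b-a)/p\ge (t-\tau)/p-2$, this yields $\|\Phi(t,\tau)\|\le\rho^{-2}\big(\rho^{1/p}\big)^{t-\tau}$, which the stated bound with $c=\rho^{-p}$ and $\gamma=\rho^{1/p}=\sqrt[p]{\|\Phi(p,0)\|}$ dominates for every $p\ge 1$ (the stated $c$ being a deliberately loose constant).

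I expect the main obstacle to be precisely this alignment point. The contraction $\|\Phi(\sigma+p,\sigma)\|\le\rho$ is only guaranteed when $\sigma$ is a multiple of $p$: a cyclically shifted period is a different matrix product, and while cyclic permutations share the same spectrum, they generally have a \emph{larger} norm, so one cannot naively chain $\|\Phi(\tau+(k+1)p,\tau+kp)\|\le\rho$ for non-aligned $\tau$. Anchoring the decomposition at period boundaries and absorbing the at most $2p$ leftover steps into the growth factor $c$ is exactly what sidesteps this difficulty, and it is the step that must be handled with care.

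For the $\omega$-limit set I would invoke the $p$-LTI decomposition of Theorem~\ref{thm: LTI representation}: the phase-$l$ subsequence $\mathbf{x}_l[k]$ (the state sampled every $p$ steps at phase $l$) obeys the LTI recursion $\mathbf{x}_l[k+1]=M_l\mathbf{x}_l[k]+N_l\mathbf{s}$, where $M_l$ is the one-period monodromy matrix at phase $l$, so that $M_l^{k}=\Phi(s_l+kp,\,s_l)$ for a starting time $s_l$ with $s_l\equiv l-1\ (\mathrm{mod}\ p)$. The exponential stability just established gives $\|M_l^{k}\|\le c\gamma^{kp}\to 0$, hence each $M_l$ is Schur, $\rho_{\mathrm{spec}}(M_l)<1$, and $I-M_l$ is invertible. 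Consequently the unique fixed point $\mathbf{x}_l^*=(I-M_l)^{-1}N_l\mathbf{s}$ exists, and from $\mathbf{x}_l[k]-\mathbf{x}_l^*=M_l^{k}\big(\mathbf{x}_l[0]-\mathbf{x}_l^*\big)$ each subsequence converges to $\mathbf{x}_l^*$ independently of the initial condition. Assembling the $p$ convergent subsequences is then routine: the full trajectory $\{\mathbf{x}[t]\}$ is the interleaving of the $p$ phase-subsequences, each individually convergent, so its set of accumulation points is exactly $\{\mathbf{x}_1^*,\dots,\mathbf{x}_p^*\}$, whence $|\omega|\le p$ (with strict inequality only when some of these points coincide) and the limit set is independent of $\mathbf{x}[0]$. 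The only care needed in this last step is to confirm that no accumulation points arise ``between'' phases, which is immediate because each of the $p$ subsequences converges.
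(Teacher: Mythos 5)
Your proof is correct, and its overall skeleton matches the paper's (one-period norm contraction plus submultiplicativity for the exponential bound, then the \(p\)-LTI decomposition of Theorem~\ref{thm: LTI representation} for the \(\omega\)-limit set); the genuine difference is in the chaining step, where your version is actually more careful than the paper's own argument. The paper asserts \(\|\Phi(t+p,t)\| = \|\Phi(p,0)\|\) for \emph{all} \(t\) and chains \(q\) such factors; as you correctly flag, this is unjustified for non-aligned \(t\): the shifted one-period product is a cyclic permutation of the aligned one, which preserves the spectrum but not the induced \(\ell_1\) norm, and since the theorem only assumes \(\mathcal{G}_0^p\) is a WDTG, Lemma~\ref{thm: union of WDTG} does not apply to the shifted window \(\mathcal{G}_t^{t+p}\) either (all factors being nonexpansive only gives \(\|\Phi(t+p,t)\|\le 1\), not strict contraction). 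Your anchored decomposition \(\Phi(t,\tau)=\Phi(t,b)\,\Phi(b,a)\,\Phi(a,\tau)\) with \(a=\lceil\tau/p\rceil p\), \(b=\lfloor t/p\rfloor p\), bounding the fractional end blocks by \(1\) and the aligned middle block by \(\|\Phi(p,0)\|^{(b-a)/p}\), repairs exactly this gap at the cost of a constant; likewise, your \(\omega\)-limit argument (\(M_l^k=\Phi(s_l+kp,s_l)\) is Schur by the ES bound, each phase subsequence converges to \((I-M_l)^{-1}N_l\,\mathbf{s}\), and interleaving \(p\) convergent subsequences yields exactly these accumulation points) is what the paper's one-sentence conclusion leaves implicit. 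Two small touch-ups: your remark that the stated \(c=\rho^{-p}\) dominates your \(\rho^{-2}\) ``for every \(p\ge1\)'' fails at \(p=1\); using the sharper leftover count \(a-\tau\le p-1\) and \(t-b\le p-1\) gives \(\|\Phi(t,\tau)\|\le\rho^{-2(p-1)/p}\,\gamma^{\,t-\tau}\), and \(2(p-1)/p\le p\) for all \(p\ge1\) (equivalently \((p-1)^2+1\ge0\)), so the stated constant always suffices. Also, when \(t-\tau<p\) and the interval straddles no period boundary (so \(t<a\)), the three-factor decomposition degenerates, but then \(\|\Phi(t,\tau)\|\le1\le c\,\gamma^{\,t-\tau}\) holds trivially since \(c\,\gamma^{\,t-\tau}=\rho^{-p+(t-\tau)/p}>1\).
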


\begin{proof}
Given that the switching occurs periodically with period \( p \), Lemma \ref{thm: union of WDTG} indicates that the transition matrix adheres to:
\[
\|\Phi(t+p,t)\| = \|\Phi(p,0)\| < 1, \quad \forall t.
\]

Letting \(t - \tau = qp + r\), where \(q \geq 0\) and \(0 \leq r < p\), we have:
\[
\|\Phi(t,\tau)\| \leq \|\Phi(p,0)\|^q \leq \frac{1}{\|\Phi(p,0)\|^p}\left(\sqrt[p]{\|\Phi(p,0)\|}\right)^{t-\tau}.
\]

Thus, ES follows with explicitly given constants \(c\) and \(\gamma\). Consequently, each subsystem (Theorem~\ref{thm: LTI representation}) converges uniquely to its fixed point, establishing the claimed \(\omega\)-limit set structure.
\end{proof}

\begin{ex}
Consider Example~\ref{ex: Example AS TVFJ} with the following modifications, yielding a periodically switching structure. Let the switching times be defined by \( t_k = t + 2 \), resulting in a switching period \( p = 4 \). Assume the susceptibility matrix in \textbf{Network 2} is \(\Lambda_2= \text{diag}(0,~0.9,~1,~1,~1) \). Under this setting, the system becomes a PSFJ model. Each subsystem converges to the same fixed point:
\[
\mathbf{x}_l^* = \begin{bmatrix}
0.50,~ 0.59,~ 0.59,~ 0.59,~ 0.59
\end{bmatrix}^ \top.
\]
The \(\omega\)-limit set contains a single point. The resulting opinion trajectory exhibits consensus-like behavior with reduced fluctuations, as shown in Figure.~\ref{fig: Example AS PSFJ Fig1}, closely resembling Figure.~\ref{fig: Example AS TVFJ Fig1}.

Now consider the case where \( \lambda_2 = \lambda_4 = 0.9 \). In this scenario, the \(\omega\)-limit set contains three points. This multi-point behavior is illustrated in Figure.~\ref{fig: Example AS PSFJ Fig2}, in accordance with Theorem~\ref{thm: exponential stability of PSFJ}.

\label{ex: Periodic Switching FJ Model}
\end{ex}

\begin{figure}[h]
\centering
\includegraphics[width=0.35\textwidth]{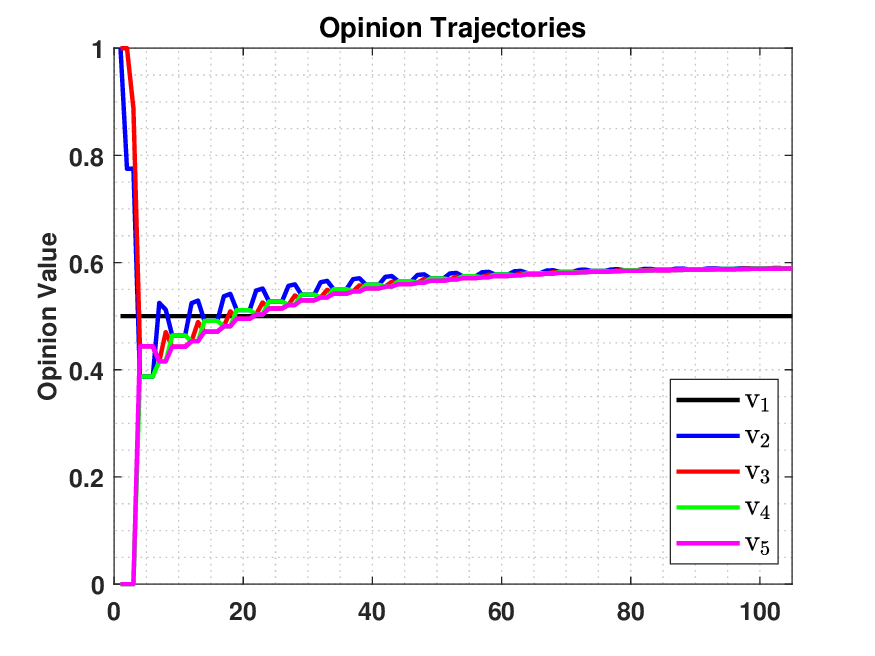}
\caption{Opinion trajectories of agents \( \mathrm{v}_1 \) to \( \mathrm{v}_5 \) in the periodic switching scenario with \(\Lambda_2= \text{diag}(0,~0.9,~1,~1,~1) \).}
\label{fig: Example AS PSFJ Fig1}
\end{figure}

\begin{figure}[h]
\centering
\includegraphics[width=0.35\textwidth]{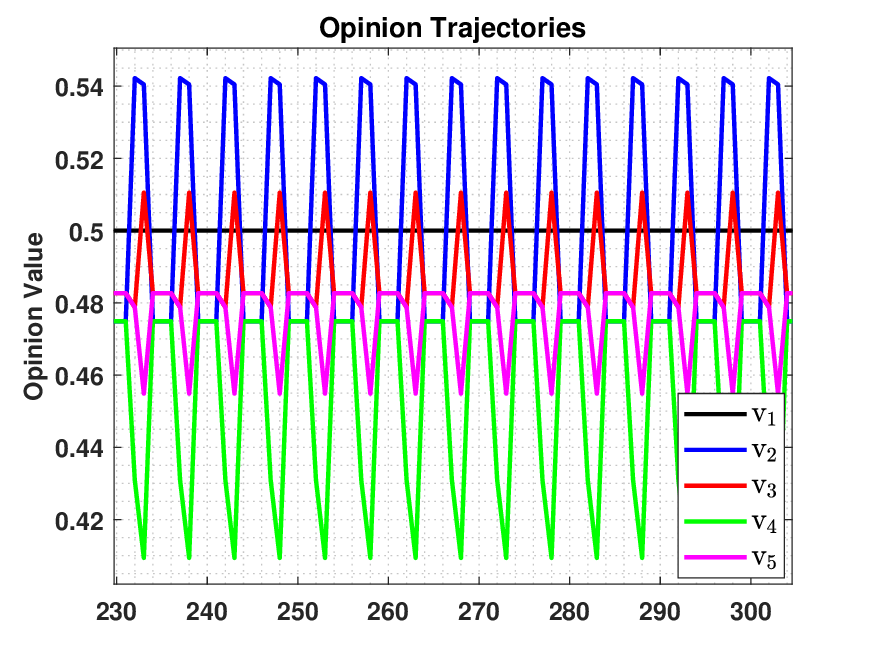}
\caption{Opinion trajectories of agents \( \mathrm{v}_1 \) to \( \mathrm{v}_5 \) in the periodic switching scenario with \(\Lambda_2= \text{diag}(0,~0.9,~1,~0.9,~1) \).}
\label{fig: Example AS PSFJ Fig2}
\end{figure}

\section{Robustness of Stability}
\label{sec: RobustnessofStabilityUnderPerturbations}

In practice, exact interaction matrices are rarely known; instead, models rely on estimates that may include noise or structural errors. Building on the exponential stability results for semi-periodic (Theorem~\ref{thm: UES under semi-periodic}) and periodic switching networks (Theorem~\ref{thm: exponential stability of PSFJ}), we now establish that these guarantees persist under bounded perturbations.

Consider the perturbed dynamics:
\begin{align}
\mathbf{x}[t+1] = P[t] \mathbf{x}[t] + D[t] \mathbf{s},
\label{eq: Perturbed Dynamics Model}
\end{align}
where \( P[t] \in \mathbb{R}^{n \times n} \) and \( D[t] \in \mathbb{R}^{n \times n} \) have nonnegative entries, with \( D[t] \) diagonal, such that
\begin{align}
    \bigl(P[t] + D[t]\bigr)\mathbf{1}_n = \mathbf{1}_n.
    \label{eq: Perturbed Dynamics Model row-stochastic condition}
\end{align}

Assume there exist nominal matrices \( \bar{W}[t] \) and \( \bar{\Lambda}[t] \), along with a bounded perturbation \( E[t] \), such that
\[
P[t] = \bar{\Lambda}[t]\bar{W}[t] + E[t],
\]
with:
\begin{itemize}
    \item \( \bar{W}[t] \) row-stochastic,
    \item \( \bar{\Lambda}[t] \) diagonal with entries in \([0,1]\),
    \item \( E[t] \) uniformly bounded in norm.
\end{itemize}
The corresponding nominal TVFJ model is
\begin{align}
\bar{\mathbf{x}}[t+1] = \bar{\Lambda}[t] \bar{W}[t] \bar{\mathbf{x}}[t] + \bigl(I - \bar{\Lambda}[t]\bigr)\mathbf{s}.
\label{eq: nominal TVFJ}
\end{align}

\begin{rem}
From~\eqref{eq: Perturbed Dynamics Model row-stochastic condition},
\[
d_i[t] = 1 - \bar{\lambda}_i[t] - \sum_{j=1}^n e_{ij}[t],
\]
where \( \bar{\lambda}_i[t] \) and \( d_i[t] \) are the \( i \)th diagonal entries of \( \bar{\Lambda}[t] \) and \( D[t] \), respectively.  
If \(\sum_{j=1}^n e_{ij}[t] = 0\), the perturbation affects only the interaction structure without altering agent \( \mathrm{v}_i \)’s susceptibility. In this case, the agent adjusts relative neighbor weights (e.g., due to local observations or shifts in interpersonal trust) while maintaining its overall openness to influence.
\end{rem}

\begin{thm}
\label{thm: perturbed-UES}
Assume the nominal model~\eqref{eq: nominal TVFJ} is exponentially stable with growth factor \( c \) and decay rate \( \gamma \). If the perturbations satisfy
\[
\|E[t]\| < -\frac{\gamma}{c} \ln \gamma,
\]
then the perturbed dynamics~\eqref{eq: Perturbed Dynamics Model} are also  exponentially stable. In particular:
\begin{itemize}
    \item If the SPFJ conditions of Theorem~\ref{thm: UES under semi-periodic} hold with period \( p_s \), then ES holds if
    \[
    \|E[t]\| < -\frac{1}{p_s}(1 - \varepsilon w^{p_s})^{p_s + \frac{1}{p_s}} \ln (1 - \varepsilon w^{p_s}),
    \]
    \item If the PSFJ conditions of Theorem~\ref{thm: exponential stability of PSFJ} hold with period \( p \), then ES holds if
    \[
    \|E[t]\| < -\frac{1}{p} \|\Phi(p,0)\|^{p + \tfrac{1}{p}} \ln \|\Phi(p,0)\|.
    \]
\end{itemize}
\end{thm}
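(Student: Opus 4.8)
The plan is to reduce the claim to an exponential bound on the perturbed state-transition matrix $\Phi_P(t,\tau) := \prod_{k=\tau}^{t-1} P[k]$, since by the definition of exponential stability it is only this homogeneous part that governs the decay of initial-condition effects (the forcing term $D[t]\mathbf{s}$ plays no role in the $\Phi$-bound). Writing $P[k] = \bar{\Lambda}[k]\bar{W}[k] + E[k]$ and letting $\bar{\Phi}$ denote the nominal transition matrix associated with~\eqref{eq: nominal TVFJ}, I would first establish the discrete variation-of-constants identity
\[
\Phi_P(t,\tau) = \bar{\Phi}(t,\tau) + \sum_{s=\tau}^{t-1} \bar{\Phi}(t,s+1)\,E[s]\,\Phi_P(s,\tau),
\]
obtained by treating $E[s]\Phi_P(s,\tau)$ as an exogenous input to the nominal recursion and invoking the solution formula~\eqref{eq:fj_solution} with initial condition $\Phi_P(\tau,\tau)=I$.

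Next I would take the $\ell_1$-induced norm of both sides, using submultiplicativity, the nominal bound $\|\bar{\Phi}(t,\tau)\| \le c\gamma^{t-\tau}$, and the uniform perturbation bound $\|E[s]\| \le \eta := \sup_t \|E[t]\|$, to obtain
\[
\|\Phi_P(t,\tau)\| \le c\gamma^{t-\tau} + \sum_{s=\tau}^{t-1} c\,\gamma^{t-s-1}\,\eta\,\|\Phi_P(s,\tau)\|.
\]
Dividing through by $\gamma^{t-\tau}$ and setting $u[t] := \gamma^{-(t-\tau)}\|\Phi_P(t,\tau)\|$ converts this into the standard discrete Gronwall form $u[t] \le c + (c\eta/\gamma)\sum_{s=\tau}^{t-1} u[s]$.

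Applying the exponential version of the discrete Gronwall inequality then yields $u[t] \le c\,e^{(c\eta/\gamma)(t-\tau)}$, hence
\[
\|\Phi_P(t,\tau)\| \le c\bigl(\gamma\, e^{c\eta/\gamma}\bigr)^{t-\tau}.
\]
Exponential stability of the perturbed system is therefore guaranteed precisely when the effective rate satisfies $\gamma\, e^{c\eta/\gamma} < 1$, i.e. $c\eta/\gamma < -\ln\gamma$, which rearranges to the stated threshold $\eta < -\tfrac{\gamma}{c}\ln\gamma$ (note $\gamma\in(0,1)$ ensures $-\ln\gamma>0$). The two bullet specializations are then immediate substitutions: inserting $c=(1-\varepsilon w^{p_s})^{-p_s}$ and $\gamma=(1-\varepsilon w^{p_s})^{1/p_s}$ from Theorem~\ref{thm: UES under semi-periodic}, and $c=\|\Phi(p,0)\|^{-p}$ and $\gamma=\|\Phi(p,0)\|^{1/p}$ from Theorem~\ref{thm: exponential stability of PSFJ}, into $-\tfrac{\gamma}{c}\ln\gamma$ and simplifying the exponents reproduces exactly the two displayed bounds.

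The main obstacle I anticipate is obtaining the \emph{sharp} threshold carrying the logarithm rather than a looser linear one: the crude Gronwall estimate $u[t]\le c(1+c\eta/\gamma)^{t-\tau}$ would only yield the weaker sufficient condition $\gamma+c\eta<1$, so it is essential to pass through the exponential form $e^{c\eta/\gamma}$ in order to land on $\eta<-\tfrac{\gamma}{c}\ln\gamma$. A secondary point requiring care is the matrix non-commutativity in the product defining $\Phi_P$: the variation-of-constants identity must preserve the correct left-to-right order of the factors $\bar{\Phi}(t,s+1)$, $E[s]$, and $\Phi_P(s,\tau)$, although once norms are taken submultiplicativity collapses the ordering and the resulting scalar recursion closes cleanly.
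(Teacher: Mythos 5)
Your proof is correct, but it takes a genuinely different route from the paper's. The paper does not carry out the perturbation analysis itself: it invokes a robustness result for discrete-time linear time-varying systems imported from~\cite{zhou2017asymptotic} (stated as Lemma~\ref{lem: perturbed-ES}), whose hypothesis is the time-\emph{averaged} condition $\sum_{k=t_0}^{t-1}\|E[k]\| \le \zeta(t-t_0) + \beta(t_0,\zeta)$ with $\zeta < -\tfrac{\gamma}{c}\ln\gamma$, and then substitutes the SPFJ and PSFJ constants exactly as you do in your final step. Your variation-of-constants identity combined with the discrete Gronwall inequality instead gives a self-contained, elementary derivation with the explicit perturbed bound $\|\Phi_P(t,\tau)\| \le c\,(\gamma\,e^{c\eta/\gamma})^{t-\tau}$. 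What each approach buys: yours is fully verifiable within the paper and produces quantitative constants for the perturbed system; the citation route buys generality, since the averaged hypothesis tolerates occasional large perturbations provided their running average stays below the threshold, whereas your argument as written needs the uniform bound $\eta = \sup_t \|E[t]\|$ (which is, however, all the theorem statement requires).

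One caveat: your closing remark about sharpness is inverted. The product-form Gronwall bound $u[t] \le c\,(1 + c\eta/\gamma)^{t-\tau}$ yields $\|\Phi_P(t,\tau)\| \le c\,(\gamma + c\eta)^{t-\tau}$ and hence the sufficient condition $\eta < (1-\gamma)/c$; since $1-\gamma > -\gamma\ln\gamma$ for all $\gamma \in (0,1)$ (the function $g(\gamma) = 1-\gamma+\gamma\ln\gamma$ satisfies $g(1)=0$ and $g'(\gamma)=\ln\gamma<0$), this linear threshold is strictly \emph{more} permissive than $-\tfrac{\gamma}{c}\ln\gamma$, not weaker. Passing through $1+x \le e^x$ loosens your estimate; the logarithmic threshold appears in the paper because that is what the averaged-perturbation lemma of~\cite{zhou2017asymptotic} delivers in its generality, not because it is the sharpest constant attainable under a uniform bound. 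This misstatement does not affect correctness: the stated hypothesis $\eta < -\tfrac{\gamma}{c}\ln\gamma$ gives $\gamma\,e^{c\eta/\gamma} < 1$ directly, so your chain of estimates establishes the theorem as stated (and, incidentally, your intermediate product bound proves a slightly stronger version of it).
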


\begin{proof}
The result follows from a robustness theorem for discrete-time linear time-varying systems~\cite{zhou2017asymptotic}:

\begin{lem}
\label{lem: perturbed-ES}
Consider the perturbed system
\[
\mathbf{x}[t+1] = \bigl(A[t] + E[t]\bigr)\mathbf{x}[t],
\]
where the nominal system \( A[t] \) is ES with growth factor \( c \) and decay rate \( \gamma \in (0,1) \).  
If
\[
\sum_{k = t_0}^{t-1} \|E[k]\| \leq \zeta (t - t_0) + \beta(t_0, \zeta), \quad \forall t \geq t_0,
\]
for some \( \beta(t_0, \zeta) > 0 \), and
\[
\zeta < -\tfrac{\gamma}{c}\ln \gamma,
\]
then the perturbed system is asymptotically stable. Moreover, if \( \beta(t_0, \zeta) \) is independent of \( t_0 \), the system is exponentially stable.
\end{lem}

Applying Lemma~\ref{lem: perturbed-ES} to~\eqref{eq: Perturbed Dynamics Model}, we require
\[
\|E[t]\| < -\tfrac{\gamma}{c} \ln \gamma.
\]

For the semi-periodic case (Theorem~\ref{thm: UES under semi-periodic}),
\[
c = \frac{1}{(1 - \varepsilon w^{p_s})^{p_s}}, \quad 
\gamma = \sqrt[p_s]{1 - \varepsilon w^{p_s}},
\]
which yields the claimed bound.  
The periodic case follows analogously by replacing \( p_s \) with \( p \) and using the UES parameters of Theorem~\ref{thm: exponential stability of PSFJ}.
\end{proof}

DTGs and WDTGs naturally arise in real networks, especially small-world or structured settings, making PSFJ and SPFJ models realistic rather than idealized~\cite{watts1998collective}. However, stability guarantees for perfectly periodic systems do not directly cover perturbed networks. Theorem~\ref{thm: perturbed-UES} fills this gap by showing that qualitative convergence persists under bounded deviations, extending stability to a broad class of practical, imperfect networks.

\section{Conclusion}
\label{Conclution}

We presented a graph–theoretic framework for analyzing the stability of opinion dynamics in the time-varying Friedkin–Johnsen model. Central to this framework are two novel temporal structures, \emph{defected temporal graphs} and \emph{weakly defected temporal graphs}, which act as topological certificates linking temporal connectivity and stubborn influence to contraction of the state transition matrix. These structures yield several core results:
\begin{enumerate}
\item \emph{Asymptotic stability} of the TVFJ model whenever DTGs recur infinitely often,
\item \emph{Exponential stability} in SPFJ, with explicit growth and decay constants,
\item \emph{Asymptotic stability} TBFJ models under the weaker requirement of infinitely many WDTGs, and
\item \emph{Bounded $\omega$-limit sets}, ensuring that all long-run opinions remain within the convex hull of the innate opinion vector.
\end{enumerate}

For periodically switching systems, we further decomposed the dynamics into $p$ LTI subsystems, enabling closed-form characterizations of the $\omega$-limit set and explicit stability certificates. These results show that periodic or semi-periodic recurrence of defected layers guarantees both convergence and predictability of long-run outcomes.

In practice, however, network structures are rarely known exactly, and real systems are subject to noise, estimation error, and temporal irregularities. To address this, we established a \emph{robustness theorem} demonstrating that \emph{exponential stability} persists under bounded perturbations of the interaction weights. This bridges the gap between idealized periodic models and realistic, imperfect networks, ensuring that qualitative asymptotic behavior is preserved even under structural or parametric deviations. A broad class of practical networks, including those approximating periodic cycles or subject to small perturbations, thus falls within the stability regime guaranteed by our analysis.

Beyond formal guarantees, the framework offers a principled lens on \emph{human–AI interaction in evolving social networks}. Human agents typically update their opinions with bounded rationality, guided by heuristics and local trust, while AI agents, such as recommender systems, inject globally optimizing rules that make the entire network highly time-varying. The DTG/WDTG framework highlights these asymmetries: stubborn influence and temporal connectivity can be visualized directly in the graph structure, providing interpretable and scalable stability certificates for complex, evolving networks.

Overall, this work establishes a rigorous, graph-theoretic foundation for the study of time-varying opinion dynamics, complementing algebraic approaches centered on ${\Lambda[t], W[t]}$. It shows that stability can be certified directly from temporal interaction topology, even under variability and perturbations. Looking ahead, several directions merit further study: extending robustness guarantees to broader uncertainty models, analyzing multi-dimensional opinion spaces, incorporating strategic game-theoretic behavior, and modeling richer mechanisms of trust formation and adaptation. These developments will deepen our understanding of how human and AI agents jointly shape collective outcomes in dynamic networks.

\bibliographystyle{IEEEtran}
\bibliography{ref}

\begin{thebibliography}{10}
\providecommand{\url}[1]{#1}
\csname url@samestyle\endcsname
\providecommand{\newblock}{\relax}
\providecommand{\bibinfo}[2]{#2}
\providecommand{\BIBentrySTDinterwordspacing}{\spaceskip=0pt\relax}
\providecommand{\BIBentryALTinterwordstretchfactor}{4}
\providecommand{\BIBentryALTinterwordspacing}{\spaceskip=\fontdimen2\font plus
\BIBentryALTinterwordstretchfactor\fontdimen3\font minus \fontdimen4\font\relax}
\providecommand{\BIBforeignlanguage}[2]{{%
\expandafter\ifx\csname l@#1\endcsname\relax
\typeout{** WARNING: IEEEtran.bst: No hyphenation pattern has been}%
\typeout{** loaded for the language `#1'. Using the pattern for}%
\typeout{** the default language instead.}%
\else
\language=\csname l@#1\endcsname
\fi
#2}}
\providecommand{\BIBdecl}{\relax}
\BIBdecl

\bibitem{french1956formal}
J.~R. French~Jr, ``A formal theory of social power.'' \emph{Psychological review}, vol.~63, no.~3, p. 181, 1956.

\bibitem{degroot1974reaching}
M.~H. DeGroot, ``Reaching a consensus,'' \emph{Journal of the American Statistical association}, vol.~69, no. 345, pp. 118--121, 1974.

\bibitem{friedkin1999social}
N.~E. Friedkin and E.~C. Johnsen, ``Social influence networks and opinion change,'' \emph{Advances in group processes}, vol.~16, no.~1, pp. 1--29, 1999.

\bibitem{abelson1964mathematical}
R.~P. Abelson, ``Mathematical models of the distribution of attitudes under controversy,'' \emph{Contributions to mathematical psychology}, 1964.

\bibitem{taylor1968towards}
M.~Taylor, ``Towards a mathematical theory of influence and attitude change,'' \emph{Human Relations}, vol.~21, no.~2, pp. 121--139, 1968.

\bibitem{ren2005consensus}
W.~Ren and R.~W. Beard, ``Consensus seeking in multiagent systems under dynamically changing interaction topologies,'' \emph{IEEE Transactions on automatic control}, vol.~50, no.~5, pp. 655--661, 2005.

\bibitem{parsegov2016novel}
S.~E. Parsegov, A.~V. Proskurnikov, R.~Tempo, and N.~E. Friedkin, ``Novel multidimensional models of opinion dynamics in social networks,'' \emph{IEEE Transactions on Automatic Control}, vol.~62, no.~5, pp. 2270--2285, 2016.

\bibitem{demarzo2003persuasion}
P.~M. DeMarzo, D.~Vayanos, and J.~Zwiebel, ``Persuasion bias, social influence, and unidimensional opinions,'' \emph{The Quarterly journal of economics}, vol. 118, no.~3, pp. 909--968, 2003.

\bibitem{xia2011clustering}
W.~Xia and M.~Cao, ``Clustering in diffusively coupled networks,'' \emph{Automatica}, vol.~47, no.~11, pp. 2395--2405, 2011.

\bibitem{acemoglu2011opinion}
D.~Acemoglu and A.~Ozdaglar, ``Opinion dynamics and learning in social networks,'' \emph{Dynamic Games and Applications}, vol.~1, no.~1, pp. 3--49, 2011.

\bibitem{askarisichani2022predictive}
O.~Askarisichani, F.~Bullo, N.~E. Friedkin, and A.~K. Singh, ``Predictive models for human--ai nexus in group decision making,'' \emph{Annals of the New York Academy of Sciences}, vol. 1514, no.~1, pp. 70--81, 2022.

\bibitem{o2022human}
T.~O’neill, N.~McNeese, A.~Barron, and B.~Schelble, ``Human--autonomy teaming: A review and analysis of the empirical literature,'' \emph{Human factors}, vol.~64, no.~5, pp. 904--938, 2022.

\bibitem{wilson2018collaborative}
H.~J. Wilson and P.~R. Daugherty, ``Collaborative intelligence: Humans and ai are joining forces,'' \emph{Harvard business review}, vol.~96, no.~4, pp. 114--123, 2018.

\bibitem{kleinberg2018human}
J.~Kleinberg, H.~Lakkaraju, J.~Leskovec, J.~Ludwig, and S.~Mullainathan, ``Human decisions and machine predictions,'' \emph{The quarterly journal of economics}, vol. 133, no.~1, pp. 237--293, 2018.

\bibitem{seeber2020machines}
I.~Seeber, E.~Bittner, R.~O. Briggs, T.~De~Vreede, G.-J. De~Vreede, A.~Elkins, R.~Maier, A.~B. Merz, S.~Oeste-Rei{\ss}, N.~Randrup \emph{et~al.}, ``Machines as teammates: A research agenda on ai in team collaboration,'' \emph{Information \& management}, vol.~57, no.~2, p. 103174, 2020.

\bibitem{amershi2019guidelines}
S.~Amershi, D.~Weld, M.~Vorvoreanu, A.~Fourney, B.~Nushi, P.~Collisson, J.~Suh, S.~Iqbal, P.~N. Bennett, K.~Inkpen \emph{et~al.}, ``Guidelines for human-ai interaction,'' in \emph{Proceedings of the 2019 chi conference on human factors in computing systems}, 2019, pp. 1--13.

\bibitem{wu2022graph}
S.~Wu, F.~Sun, W.~Zhang, X.~Xie, and B.~Cui, ``Graph neural networks in recommender systems: a survey,'' \emph{ACM Computing Surveys}, vol.~55, no.~5, pp. 1--37, 2022.

\bibitem{ko2022survey}
H.~Ko, S.~Lee, Y.~Park, and A.~Choi, ``A survey of recommendation systems: recommendation models, techniques, and application fields,'' \emph{Electronics}, vol.~11, no.~1, p. 141, 2022.

\bibitem{10506571}
Z.~Zhao, W.~Fan, J.~Li, Y.~Liu, X.~Mei, Y.~Wang, Z.~Wen, F.~Wang, X.~Zhao, J.~Tang, and Q.~Li, ``Recommender systems in the era of large language models (llms),'' \emph{IEEE Transactions on Knowledge and Data Engineering}, vol.~36, no.~11, pp. 6889--6907, 2024.

\bibitem{chaney2018algorithmic}
A.~J. Chaney, B.~M. Stewart, and B.~E. Engelhardt, ``How algorithmic confounding in recommendation systems increases homogeneity and decreases utility,'' in \emph{Proceedings of the 12th ACM conference on recommender systems}, 2018, pp. 224--232.

\bibitem{bakshy2015exposure}
E.~Bakshy, S.~Messing, and L.~A. Adamic, ``Exposure to ideologically diverse news and opinion on facebook,'' \emph{Science}, vol. 348, no. 6239, pp. 1130--1132, 2015.

\bibitem{aggarwal2014evolutionary}
C.~Aggarwal and K.~Subbian, ``Evolutionary network analysis: A survey,'' \emph{ACM Computing Surveys (CSUR)}, vol.~47, no.~1, pp. 1--36, 2014.

\bibitem{proskurnikov2017tutorial}
A.~V. Proskurnikov and R.~Tempo, ``A tutorial on modeling and analysis of dynamic social networks. part i,'' \emph{Annual Reviews in Control}, vol.~43, pp. 65--79, 2017.

\bibitem{rainer2002opinion}
H.~Rainer and U.~Krause, ``Opinion dynamics and bounded confidence: models, analysis and simulation,'' 2002.

\bibitem{olfati2006flocking}
R.~Olfati-Saber, ``Flocking for multi-agent dynamic systems: Algorithms and theory,'' \emph{IEEE Transactions on automatic control}, vol.~51, no.~3, pp. 401--420, 2006.

\bibitem{lorenz2005stabilization}
J.~Lorenz, ``A stabilization theorem for dynamics of continuous opinions,'' \emph{Physica A: Statistical Mechanics and its Applications}, vol. 355, no.~1, pp. 217--223, 2005.

\bibitem{blondel2005convergence}
V.~D. Blondel, J.~M. Hendrickx, A.~Olshevsky, and J.~N. Tsitsiklis, ``Convergence in multiagent coordination, consensus, and flocking,'' in \emph{Proceedings of the 44th IEEE Conference on Decision and Control}.\hskip 1em plus 0.5em minus 0.4em\relax IEEE, 2005, pp. 2996--3000.

\bibitem{10313365}
M.~H. Abedinzadeh and E.~Akyol, ``A multidimensional opinion evolution model with confirmation bias,'' in \emph{2023 59th Annual Allerton Conference on Communication, Control, and Computing (Allerton)}, 2023, pp. 1--8.

\bibitem{proskurnikov2017opinion}
A.~V. Proskurnikov, R.~Tempo, M.~Cao, and N.~E. Friedkin, ``Opinion evolution in time-varying social influence networks with prejudiced agents,'' \emph{IFAC-PapersOnLine}, vol.~50, no.~1, pp. 11\,896--11\,901, 2017.

\bibitem{disaro2024balancing}
G.~Disar{\`o} and M.~E. Valcher, ``Balancing homophily and prejudices in opinion dynamics: An extended friedkin--johnsen model,'' \emph{Automatica}, vol. 166, p. 111711, 2024.

\bibitem{holme2012temporal}
P.~Holme and J.~Saram{\"a}ki, ``Temporal networks,'' \emph{Physics reports}, vol. 519, no.~3, pp. 97--125, 2012.

\bibitem{kivela2014multilayer}
M.~Kivel{\"a}, A.~Arenas, M.~Barthelemy, J.~P. Gleeson, Y.~Moreno, and M.~A. Porter, ``Multilayer networks,'' \emph{Journal of complex networks}, vol.~2, no.~3, pp. 203--271, 2014.

\bibitem{ju2025influence}
X.~Ju, A.~Seyfi, A.~Bouyer, A.~Rouhi, X.~Liu, and B.~Arasteh, ``Influence maximization in multilayer social networks using transformer-based node embeddings and deep neural networks,'' \emph{Neurocomputing}, p. 130939, 2025.

\bibitem{ShiuNaghizadeh2025Allerton}
R.-A. Shiu and P.~Naghizadeh, ``Coordination games on multiplex networks: Consensus and convergence of opinion dynamics,'' in \emph{2025 61st Annual Allerton Conference on Communication, Control, and Computing (Allerton)}, 2025.

\bibitem{hu2017opinion}
H.-B. Hu, C.-H. Li, and Q.-Y. Miao, ``Opinion diffusion on multilayer social networks,'' \emph{Advances in Complex Systems}, vol.~20, no. 06n07, p. 1750015, 2017.

\bibitem{mackowiak2023rational}
B.~Ma{\'c}kowiak, F.~Mat{\v{e}}jka, and M.~Wiederholt, ``Rational inattention: A review,'' \emph{Journal of Economic Literature}, vol.~61, no.~1, pp. 226--273, 2023.

\bibitem{zhou2017asymptotic}
B.~Zhou and T.~Zhao, ``On asymptotic stability of discrete-time linear time-varying systems,'' \emph{IEEE Transactions on Automatic Control}, vol.~62, no.~8, pp. 4274--4281, 2017.

\bibitem{watts1998collective}
D.~J. Watts and S.~H. Strogatz, ``Collective dynamics of ‘small-world’networks,'' \emph{nature}, vol. 393, no. 6684, pp. 440--442, 1998.

\end{thebibliography}

\begin{IEEEbiography}[{\includegraphics[width=1.1in,height=1.25in,clip]{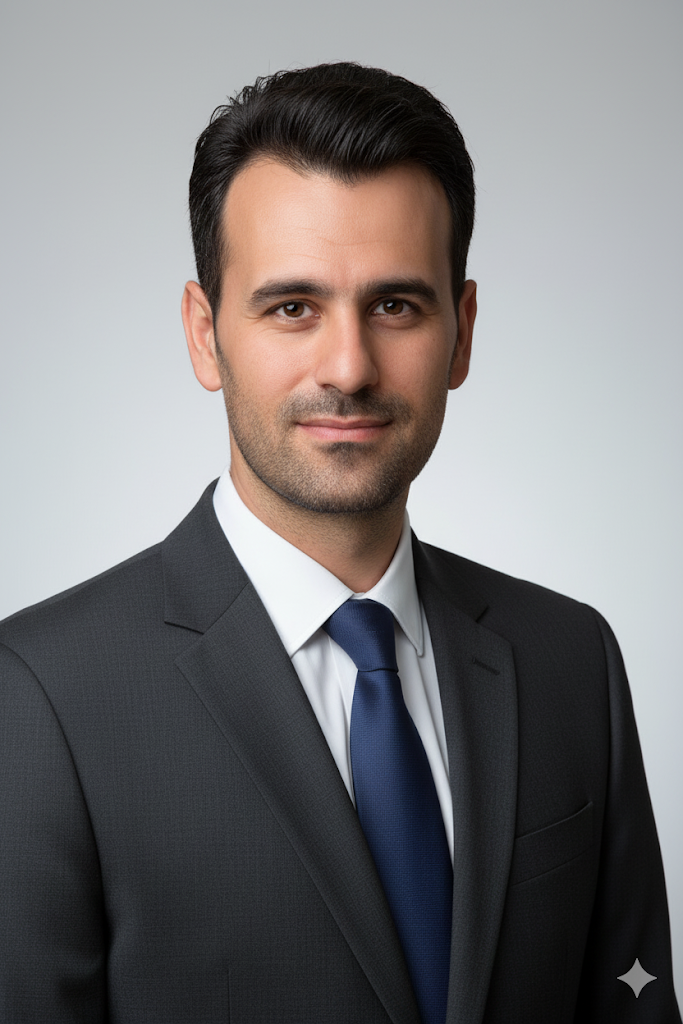}}]{M. Hossein Abedinzadeh}
is currently pursuing a Ph.D. in the Department of Electrical and Computer Engineering at the State University of New York in Binghamton, NY. From 2018 to 2021, he served as a research associate at Isfahan University of Technology, where he worked on modeling, simulation, and control of complex dynamical systems. He earned his second M.Sc. in Applied Mathematics: Differential Equations and Dynamical Systems from IUT in 2020. His first M.Sc. was in Aerospace Engineering: Flight Dynamics and Control from Sharif University of Technology, Tehran, Iran, in 2011. His research interests mainly focus on complex social network analysis.
\end{IEEEbiography}

\begin{IEEEbiography}[{\includegraphics[width=1.1in,height=1.25in,clip]{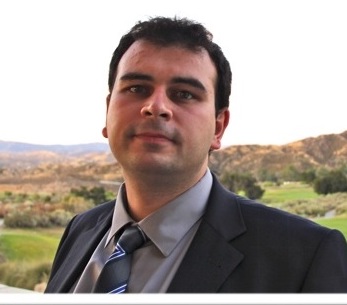}}]{Emrah Akyol}
is an Associate Professor of Electrical and Computer Engineering at the State University of New York at Binghamton, where he has been since Sept. 2017. He received his Ph.D. degree in 2011 from the University of California at Santa Barbara. From 2006 to 2007, he held positions at Hewlett–Packard Laboratories and NTT Docomo Laboratories, both in Palo Alto, CA, where he worked on image and video compression topics. From 2013 to 2014, Dr. Akyol was a postdoctoral researcher in the Electrical Engineering Department at the University of Southern California and, between 2014 and 2017, in the Coordinated Science Laboratory at the University of Illinois at Urbana–Champaign. His current research focuses on information processing challenges associated with socio-cyber-physical systems. He is a Senior Member of IEEE and a 2021 NSF CAREER Award recipient.
\end{IEEEbiography}

\end{document}